\newtheorem{theorem}{Theorem}[section]
\theoremstyle{definition}
\theoremstyle{remark}
\numberwithin{equation}{section}
\theoremstyle{plain}
\newtheorem{notation}{Notation}
\newtheorem{proposition}{Proposition}
\begin{document}
\title[Turing patterns and $p$-adic FitzHugh-Nagumo systems]{Turing patterns
in a $p$-adic FitzHugh-Nagumo system on the unit ball}
\author[Chac\'{o}n-Cort\'{e}s]{L. F. Chac\'{o}n-Cort\'{e}s}
\address{Pontificia Universidad Javeriana, Departamento de Matem\'{a}ticas,
Cra. 7 N. 40-62, Bogot\'{a} D.C., Colombia}
\email{leonardo.chacon@javeriana.edu.co}
\author[Garcia-Bibiano]{C. A. Garcia-Bibiano}
\address{Centro de Investigaci\'{o}n y de Estudios Avanzados del Instituto
Polit\'{e}cnico Nacional. Departamento de Matem\'{a}ticas, Unidad Quer\'{e}%
taro. Libramiento Norponiente \#2000, Fracc. Real de Juriquilla. Santiago de
Quer\'{e}taro, Qro. 76230. M\'{e}xico}
\email{cagarcia@math.cinvestav.mx}
\author[Z\'{u}\~{n}iga-Galindo]{W. A. Z\'{u}\~{n}iga-Galindo$^{1}$}
\address{University of Texas Rio Grande Valley. School of Mathematical \&
Statistical Sciences. One West University Blvd. Brownsville, TX 78520,
United States}
\email{wilson.zunigagalindo@utrgv.edu}
\thanks{The third author was partially supported by the Lokenath Debnath
Endowed Professorship, UTRGV}
\subjclass[2000]{Primary 47G30, 35C07; Secondary 32P05}
\keywords{FitzHugh-Nagumo systems, Turing patterns, traveling waves, $p$%
-adic analysis.}

\begin{abstract}
We introduce discrete and $p$-adic continuous versions of the
FitzHugh-Nagumo system on $\mathbb{\ }$the one-dimensional $p$-adic unit
ball. We provide criteria for the existence of Turing patterns. We present
extensive simulations of some of these systems. The simulations show that
the Turing patterns are traveling waves in the $p$-adic unit ball.
\end{abstract}

\maketitle

\section{Introduction}

Several models involving parabolic equations have been used in neuroscience
for the propagation of nerve impulses. Among these models, the one of
FitzHugh-Nagumo plays a central role. Proposed in the 1950s by FitzHugh,
this model accurately explains the propagation of electric impulses along
the nerve axon of the giant squid, see \cite{Murray}, \cite{Perthame} and
the references \ therein. Nowadays FitzHugh-Nagumo system is the simplest
model to describe pulse propagation in a spatial region. The simplest
version of this system is 
\begin{equation}
\left\{ 
\begin{array}{l}
\partial _{t}u(x,t)=mu-u^{3}-v+L_{u}\Delta u \\ 
\\ 
\partial _{t}v(x,t)=c(u-av-b)+L_{v}\Delta v,%
\end{array}%
\right.  \label{FN_Classical}
\end{equation}%
where the system parameters $a,b,c,m,L_{u}$, and $L_{v}$, are assumed to be
positive, and the functions $u$ and $v$ depend on time $t\geq 0$ and the
position $x\in \mathbb{R}$ on the domain of interest. The variable $u$
promotes the self-growth of $u$ and, at the same time, the growth of $v$ and
can thus be named an activator, while $v$ plays the role of an inhibitor
that annuls the growth of $u$.

In this paper, we introduce a $p$-adic counterpart of the system (\ref%
{FN_Classical}). In the new model $x$ runs through the ring of $p$-adic
integers $\mathbb{Z}_{p}$, where $p$ is a fixed prime number, and $t$ is a
real variable. Geometrically, $\mathbb{Z}_{p}$ is an infinite rooted tree;
analytically, $\mathbb{Z}_{p}$ \ is a locally compact topological additive
group a very rich mathematical structure. The system takes the following form%
\begin{equation}
\left\{ 
\begin{array}{l}
{{\frac{\partial u}{\partial t}}(x,t)=f(u,v)-}\left( {\boldsymbol{D}%
_{0}^{\alpha }-\lambda }\right) {u(x,t)} \\ 
\\ 
{{\frac{\partial v}{\partial t}}(x,t)=g(u,v)-d\left( {\boldsymbol{D}%
_{0}^{\alpha }-\lambda }\right) v(x,t)}\text{,}\ \text{ }x\in \mathbb{Z}_{p}%
\text{, }\ t\geq 0,%
\end{array}%
\right.  \label{FN_p_adico}
\end{equation}%
where ${\boldsymbol{D}_{0}^{\alpha }-\lambda }$ is the Vladimirov operator
on $\mathbb{Z}_{p}$, and $f(u,v)=\mu u-u^{3}-v$, $g(u,v)=\gamma (u-\delta
v-\beta )$, where $\mu $, $\beta $ are real numbers, and $\gamma $, $\delta $%
, $d$ are positive real numbers.

This system admits a natural discretization of the form 
\begin{equation}
\left\{ 
\begin{array}{cc}
\frac{\partial }{\partial t}\left[ u_{L}(I,t)\right] _{I\in G_{L}}= & \left[
\mu u_{L}(I,t)-u_{L}^{3}(I,t)-v_{L}(I,t)\right] _{I\in G_{L}}-A_{L}^{\alpha }%
\left[ u_{L}(I,t)\right] _{I\in G_{L}} \\ 
&  \\ 
\frac{\partial }{\partial t}\left[ v_{L}(I,t)\right] _{I\in G_{L}}= & \left[
\gamma \left( u_{L}(I,t)-\delta v_{L}(I,t)-\beta \right) \right] _{I\in
G_{L}}-dA_{L}^{\alpha }\left[ v_{L}(I,t)\right] _{I\in G_{L}},%
\end{array}%
\right.  \label{FN_p_Discrete}
\end{equation}%
where $G_{L}$ is a finite rooted tree with $L$ levels, and matrix $%
A_{L}^{\alpha }$ is \ a discretization of operator ${\boldsymbol{D}%
_{0}^{\alpha }-\lambda }$.

We present Turing instability criteria for systems (\ref{FN_p_adico}) and (%
\ref{FN_p_Discrete}), see Theorems \ref{Theorem 1}, \ref{TheoremB}. The
conditions for the existence of Turing patterns for both systems are
essentially the same, except for one condition which involves a subset $%
\Gamma $ of the eigenvalues of ${\boldsymbol{D}_{0}^{\alpha }-\lambda }$, in
the case of system (\ref{FN_p_adico}), and a subset $\Gamma _{L}$ of the
eigenvalues of matrix $A_{L}^{\alpha }$, in the case of (\ref{FN_p_Discrete}%
). We provide extensive numerical simulations of some systems of type (\ref%
{FN_p_Discrete}); in particular, these experiments show that the Turing
patterns are traveling waves inside the unit ball $\mathbb{Z}_{p}$. Our
numerical experiments also show that the eigenvalues of matrix $%
A_{L}^{\alpha }$ approximate the eigenvalues of ${\boldsymbol{D}_{0}^{\alpha
}-\lambda }$. We conjecture that the Turing patterns of (\ref{FN_p_Discrete}%
) converge, in some sense, to the Turing patterns of (\ref{FN_p_adico}). The
results of Digernes and his collaborators on the problem of approximation of
spectra of Vladimirov \ operator $\boldsymbol{D}^{\alpha }$ by matrices of
type $A_{L}^{\alpha }$, \cite{Digernes}-\cite{Digernes et al} provide strong
support to our conjecture.

Nowadays, the study of Turing patterns on networks is a relevant area. In
the 70s, Othmer and Scriven pointed out that Turing instability can occur in
network-organized systems \cite{Othmer et al 1}-\cite{Othmer et al 2}. Since
then, reaction-diffusion models on networks have been studied intensively,
see, e.g., \cite{Ambrosio et al}, \cite{Boccaletti et al}, \cite{Chung}, 
\cite{Ide}, \cite{Mocarlo}- \cite{Mugnolo}, \cite{Nakao-Mikhailov}-\cite%
{Othmer et al 2}, \cite{Slavova et al}, \cite{von Below}-\cite{Zhao}, and
the references therein. In particular, Turing patterns of discrete
FitzHugh-Nagumo systems have also been studied \cite{Carletti}. In \cite%
{zuniga2020reaction}-\cite{W-morfo}, the last author established \ the
existence of Turing patterns for specific $p$-adic systems of
reaction-diffusion equations. Still, these papers need to consider the
problem of the numerical approximation of the Turing patterns.

The article is organized as follows. In Section \ref{Section_2}, we review
some basic aspects of the $p$-adic analysis and fix the notation. In Section %
\ref{Section_3} presents some basic aspects of the Vladimirov operator, the $%
p$-adic\ heat equation on the unit ball. In Section \ref{Section_4}, we
introduce our $p$-adic FitzHugh-Nagumo system, and give a Turing instability
criterion, see Theorem \ref{Theorem 1}. In Section \ref{Section_5}, we study
a discrete version of our $p$-adic FitzHugh-Nagumo system, and give a Turing
instability criterion, see Theorem \ref{TheoremB}. Finally, in Section \ref%
{Section_6}, we provide extensive numerical simulation for some discrete
FitzHugh-Nagumo systems and their Turing patterns.

\section{\label{Section_2}$p$-Adic Analysis: Essential Ideas}

In this section, we collect some basic results on $p$-adic analysis that we
use through the article. For a detailed exposition the reader may consult 
\cite{Alberio et al}, \cite{Kochubei}, \cite{Taibleson}, \cite{V-V-Z}.

\subsection{The field of $p$-adic numbers}

Along this article $p$ will denote a prime number. The field of $p$-adic
numbers $\mathbb{Q}_{p}$ is defined as the completion of the field of
rational numbers $\mathbb{Q}$ with respect to the $p$-adic norm $|\cdot|_{p}$%
, which is defined as 
\begin{equation*}
\left\vert x\right\vert _{p}=\left\{ 
\begin{array}{lll}
0 & \text{if} & x=0 \\ 
&  &  \\ 
p^{-\gamma} & \text{if} & x=p^{\gamma}\frac{a}{b}\text{,}%
\end{array}
\right.
\end{equation*}
where $a$ and $b$ are integers coprime with $p$. The integer $\gamma:=ord(x)$%
, with $ord(0):=+\infty$, is called the\textit{\ }$p$-\textit{adic order of} 
$x$.

Any $p$-adic number $x\neq0$ has a unique expansion of the form 
\begin{equation*}
x=p^{ord(x)}\sum_{j=0}^{\infty}x_{j}p^{j},
\end{equation*}
where $x_{j}\in\{0,\dots,p-1\}$ and $x_{0}\neq0$. By using this expansion,
we define \textit{the fractional part of }$x\in\mathbb{Q}_{p}$, denoted $%
\{x\}_{p}$, as the rational number 
\begin{equation*}
\left\{ x\right\} _{p}=\left\{ 
\begin{array}{lll}
0 & \text{if} & x=0\text{ or }ord(x)\geq0 \\ 
&  &  \\ 
p^{ord(x)}\sum_{j=0}^{-ord_{p}(x)-1}x_{j}p^{j} & \text{if} & ord(x)<0.%
\end{array}
\right.
\end{equation*}

\subsection{Basic topology of $\mathbb{Q}_{p}$}

For $r\in \mathbb{Z}$, denote by $B_{r}(a)=\{x\in \mathbb{Q}%
_{p};|x-a|_{p}\leq p^{r}\}$ \textit{the ball of radius }$p^{r}$ \textit{with
center at} $a\in \mathbb{Q}_{p}$, and take $B_{r}(0):=B_{r}$. We also denote
by $S_{r}(a)=\{x\in \mathbb{Q}_{p};|x-a|_{p}=p^{r}\}$ \textit{the sphere of
radius }$p^{r}$ \textit{with center at} $a\in \mathbb{Q}_{p}$, and take $%
S_{r}(0):=S_{r}$. We notice that $S_{0}^{1}=\mathbb{Z}_{p}^{\times }$ (the
group of units of $\mathbb{Z}_{p}$). The balls and spheres are both open and
closed subsets in $\mathbb{Q}_{p}$. In addition, two balls in $\mathbb{Q}%
_{p} $ are either disjoint or one is contained in the other. \newline
As a topological space $\left( \mathbb{Q}_{p},|\cdot |_{p}\right) $ is
totally disconnected, i.e., the only connected subsets of $\mathbb{Q}_{p}$
are the empty set and the points. A subset of $\mathbb{Q}_{p}$ is compact if
and only if it is closed and bounded in $\mathbb{Q}_{p}$, see e.g., \cite[%
Section 1.3]{V-V-Z}, or \cite[Section 1.8]{Alberio et al}. Since $(\mathbb{Q}%
_{p},+)$ is a locally compact topological group, there exists a Haar measure 
$dx$, which is invariant under translations, i.e., $d(x+a)=dx$. If we
normalize this measure by the condition $\int_{\mathbb{Z}_{p}}dx=1$, then $%
dx $ is unique.

\begin{notation}
We will use $\Omega\left( p^{-r}|x-a|_{p}\right) $ to denote the
characteristic function of the ball $B_{r}(a)$. For more general sets, we
will use the notation $1_{A}$ for the characteristic function of set $A$.
\end{notation}

\subsection{The Bruhat-Schwartz space}

A complex-valued function $\varphi$ defined on $\mathbb{Q}_{p}$ is \textit{%
called locally constant} if for any $x\in\mathbb{Q}_{p}$ there exist an
integer $l(x)\in\mathbb{Z}$ such that 
\begin{equation}
\varphi(x+x^{\prime})=\varphi(x)\text{ for any }x^{\prime}\in B_{l(x)}.
\label{local_constancy}
\end{equation}
A function $\varphi:\mathbb{Q}_{p}\rightarrow\mathbb{C}$ is called a \textit{%
Bruhat-Schwartz function (or a test function)} if it is locally constant
with compact support. Any test function can be represented as a linear
combination, with complex coefficients, of characteristic functions of
balls. The $\mathbb{C}$-vector space of Bruhat-Schwartz functions is denoted
by $\mathcal{D}(\mathbb{Q}_{p})$. We denote by $\mathcal{D}_{\mathbb{R}}(%
\mathbb{Q}_{p})$ the $\mathbb{R}$-vector space of Bruhat-Schwartz functions.
For $\varphi\in\mathcal{D}(\mathbb{Q}_{p})$, the largest number $%
l=l(\varphi) $ satisfying (\ref{local_constancy}) is called \textit{the
exponent of local constancy (or the parameter of constancy) of} $\varphi$.

We denote by $\mathcal{C}\left( \mathbb{Q}_{p}\right) $, the $\mathbb{C}$%
-vector space of continuous functions defined on $\mathbb{Q}_{p}$, and by $%
\mathcal{C}_{\mathbb{R}}\left( \mathbb{Q}_{p}\right) $ its real counterpart.

\subsection{$L^{\protect\rho}$ spaces}

Given an open subset $U$ of subset of $\mathbb{Q}_{p}$, and $\rho \in
\lbrack 1,\infty )$, we denote by $L^{\rho }\left( U\right) :=L^{\rho
}\left( U,dx\right) ,$ the $\mathbb{C}$-vector space of all the
complex-valued functions $\varphi $ satisfying 
\begin{equation*}
\left\Vert \varphi \right\Vert _{\rho }=\left\{ \int\limits_{U}\left\vert
\varphi \left( x\right) \right\vert ^{\rho }dx\right\} ^{\frac{1}{\rho }%
}<\infty .
\end{equation*}%
The corresponding $\mathbb{R}$-vector space are denoted as $L_{\mathbb{R}%
}^{\rho }\left( U\right) =L_{\mathbb{R}}^{\rho }\left( U,dx\right) $, $1\leq
\rho <\infty $. We denote by $\mathcal{D}(U)$ the $\mathbb{C}$-vector space
of test functions with supports contained in $U$, then $\mathcal{D}(U)$ is
dense in $L^{\rho }\left( U\right) $, for $1\leq \rho <\infty $, see e.g., 
\cite[Section 4.3]{Alberio et al}.

\subsection{The Fourier transform}

Set $\chi _{p}(y)=\exp (2\pi i\{y\}_{p})$ for $y\in \mathbb{Q}_{p}$. The map 
$\chi _{p}(\cdot )$ is an additive character on $\mathbb{Q}_{p}$, i.e., a
continuous map from $\left( \mathbb{Q}_{p},+\right) $ into $S$ (the unit
circle considered as multiplicative group) satisfying $\chi
_{p}(x_{0}+x_{1})=\chi _{p}(x_{0})\chi _{p}(x_{1})$, $x_{0},x_{1}\in \mathbb{%
Q}_{p}$. \ The additive characters of $\mathbb{Q}_{p}$ form an Abelian group
which is isomorphic to $\left( \mathbb{Q}_{p},+\right) $. The isomorphism is
given by $\kappa \rightarrow \chi _{p}(\kappa x)$, see, e.g., \cite[Section
2.3]{Alberio et al}.

The Fourier transform of $\varphi\in\mathcal{D}(\mathbb{Q}_{p})$ is defined
as 
\begin{equation*}
(\mathcal{F}\varphi(\xi)=\int \limits_{\mathbb{Q}_{p}}\chi_{p}(\xi
x)\varphi(x)dx\text{, for }\xi\in\mathbb{Q}_{p},
\end{equation*}
where $dx$ is the normalized Haar measure on $\mathbb{Q}_{p}$. We will also
use the notation $\mathcal{F}_{x\rightarrow\xi}\varphi$ and $\widehat{%
\varphi }$ for the Fourier transform of $\varphi$.

The Fourier transform extends to $L^{2}\left( \mathbb{Q}_{p}\right) $. If $%
f\in L^{2}\left( \mathbb{Q}_{p}\right) $, its Fourier transform is defined
as 
\begin{equation*}
(\mathcal{F}f)(\xi )=\lim_{k\rightarrow \infty }\int\limits_{|x|_{p}\leq
p^{k}}\chi _{p}(\xi x)f(x)dx,\quad \text{for }\xi \in 
\mathbb{Q}
_{p},
\end{equation*}%
where the limit is taken in $L^{2}\left( \mathbb{Q}_{p}\right) $.

\subsection{Distributions}

The $\mathbb{C}$-vector space $\mathcal{D}^{\prime }\left( \mathbb{Q}%
_{p}\right) $\ of all continuous linear functionals on $\mathcal{D}(\mathbb{Q%
}_{p})$ is called the \textit{Bruhat-Schwartz space of distributions}. Every
linear functional on $\mathcal{D}(\mathbb{Q}_{p})$ is continuous, i.e., $%
\mathcal{D}^{\prime }\left( \mathbb{Q}_{p}\right) $\ agrees with the
algebraic dual of $\mathcal{D}(\mathbb{Q}_{p})$, see e.g., \cite[Chapter 1,
VI.3, Lemma]{V-V-Z}. We denote by $\mathcal{D}_{\mathbb{R}}^{\prime }\left( 
\mathbb{Q}_{p}\right) $ the dual space of $\mathcal{D}_{\mathbb{R}}\left( 
\mathbb{Q}_{p}\right) $.

We endow $\mathcal{D}^{\prime}\left( \mathbb{Q}_{p}\right) $ with the weak
topology, i.e., a sequence $\left\{ T_{j}\right\} _{j\in\mathbb{N}}$ in $%
\mathcal{D}^{\prime}\left( \mathbb{Q}_{p}\right) $ converges to $T$ if $%
\lim_{j\rightarrow\infty}T_{j}\left( \varphi\right) =T\left( \varphi\right) $
for any $\varphi\in\mathcal{D}(\mathbb{Q}_{p})$. The map 
\begin{equation*}
\begin{array}{lll}
\mathcal{D}^{\prime}\left( \mathbb{Q}_{p}\right) \times\mathcal{D}(\mathbb{Q}%
_{p}) & \rightarrow & \mathbb{C} \\ 
\left( T,\varphi\right) & \rightarrow & T\left( \varphi\right)%
\end{array}%
\end{equation*}
is a bilinear form, which is continuous in $T$ and $\varphi$ separately. We
call this map the \textit{pairing} between $\mathcal{D}^{\prime}\left( 
\mathbb{Q}_{p}\right) $ and $\mathcal{D}(\mathbb{Q}_{p})$. From now on we
will use $\left( T,\varphi\right) $ instead of $T\left( \varphi\right) $.

Every $f$\ in $L_{loc}^{1}$ defines a distribution $f\in\mathcal{D}^{\prime
}\left( \mathbb{Q}_{p}\right) $ by the formula 
\begin{equation*}
\left( f,\varphi\right) =\int \limits_{\mathbb{Q}_{p}}f\left( x\right)
\varphi\left( x\right) dx.
\end{equation*}
Notice that for $f$\ $\in L_{\mathbb{R}}^{2}\left( \mathbb{Q}_{p}\right) $, $%
\left( f,\varphi\right) =\left\langle f,\varphi\right\rangle $, where $%
\left\langle \cdot,\cdot\right\rangle $ denotes the scalar product in $L_{%
\mathbb{R}}^{2}\left( \mathbb{Q}_{p}\right) $.

\subsection{The Fourier transform of a distribution}

The Fourier transform $\mathcal{F}\left[ T\right] $ of a distribution $T\in%
\mathcal{D}^{\prime}\left( \mathbb{Q}_{p}\right) $ is defined by 
\begin{equation*}
\left( \mathcal{F}\left[ T\right] ,\varphi\right) =\left( T,\mathcal{F}\left[
\varphi\right] \right) \text{ for all }\varphi\in\mathcal{D}(\mathbb{Q}_{p})%
\text{.}
\end{equation*}
The Fourier transform $T\rightarrow\mathcal{F}\left[ T\right] $ is a linear
and continuous isomorphism from $\mathcal{D}^{\prime}\left( \mathbb{Q}%
_{p}\right) $\ onto $\mathcal{D}^{\prime}\left( \mathbb{Q}_{p}\right) $.
Furthermore, $T=\mathcal{F}\left[ \mathcal{F}\left[ T\right] \left(
-\xi\right) \right] $.

\section{\label{Section_3} Vladimirov operator and $p$-adic wavelets}

\subsection{The $p$-adic heat equation}

For $\alpha >0$, the Vladimirov operator $\boldsymbol{D}^{\alpha }$ is
defined as 
\begin{equation*}
\begin{array}{ccc}
\mathcal{D}(\mathbb{Q}_{p}) & \rightarrow & L^{2}(\mathbb{Q}_{p})\cap 
\mathcal{C}\left( \mathbb{Q}_{p}\right) \\ 
&  &  \\ 
\varphi & \rightarrow & \boldsymbol{D}^{\alpha }\varphi ,%
\end{array}%
\end{equation*}%
where%
\begin{equation*}
\left( \boldsymbol{D}^{\alpha }\varphi \right) \left( x\right) =\frac{%
1-p^{\alpha }}{1-p^{-\alpha -1}}\int\limits_{\mathbb{Q}_{p}}\frac{\left[
\varphi \left( x-y\right) -\varphi \left( x\right) \right] }{\left\vert
y\right\vert _{p}^{\alpha +1}}dy.
\end{equation*}%
The $p$-adic analogue of the heat equation is%
\begin{equation*}
\frac{\partial u\left( x,t\right) }{\partial t}+a\boldsymbol{D}^{\alpha
}u\left( x,t\right) =0\text{, with }a>0\text{.}
\end{equation*}%
The solution of the Cauchy problem attached to the heat equation with
initial datum $u\left( x,0\right) =\varphi \left( x\right) \in \mathcal{D}(%
\mathbb{Q}_{p})$ is given by%
\begin{equation*}
u\left( x,t\right) =\int\limits_{\mathbb{Q}_{p}}Z\left( x-y,t\right)
\varphi \left( y\right) dy,
\end{equation*}%
where $Z\left( x,t\right) $ is the $p$\textit{-adic heat kernel} defined as 
\begin{equation}
Z\left( x,t\right) =\int\limits_{\mathbb{Q}_{p}}\chi _{p}\left( -x\xi
\right) e^{-at\left\vert \xi \right\vert _{p}^{\alpha }}d\xi ,
\label{het-kernel}
\end{equation}%
where $\chi _{p}\left( -x\xi \right) $\ is the standard additive character
of the group $\left( \mathbb{Q}_{p},+\right) $. The $p$-adic heat kernel is
the transition density function of a Markov stochastic process with space
state $\mathbb{Q}_{p}$, see, e.g., \cite{Kochubei}, \cite{Zuniga-LNM-2016}.

\subsection{The $p$-adic heat equation on the unit ball}

We define the operator $\boldsymbol{D}_{0}^{\alpha }$, $\alpha >0$, by
restricting $\boldsymbol{D}^{\alpha }$ to $\mathcal{D}(\mathbb{Z}_{p})$ and
considering $\left( \boldsymbol{D}^{\alpha }\varphi \right) \left( x\right) $
only for $x\in \mathbb{Z}_{p}$. The operator $\boldsymbol{D}_{0}^{\alpha }$\
satisfies 
\begin{equation}
\left( \boldsymbol{D}_{0}^{\alpha }-\lambda \right) \varphi (x)=\frac{%
1-p^{\alpha }}{1-p^{-\alpha -1}}\int\limits_{\mathbb{Z}_{p}}\frac{\varphi
(x-y)-\varphi (x)}{|y|_{p}^{\alpha +1}}dy,  \label{TV-Ball}
\end{equation}%
for $\mathbb{\varphi \in }\mathcal{D}(\mathbb{Z}_{p})$, with 
\begin{equation*}
\lambda =\frac{p-1}{p^{\alpha +1}-1}p^{\alpha }.
\end{equation*}

Consider the Cauchy problem%
\begin{equation*}
\left\{ 
\begin{array}{lll}
\frac{\partial u\left( x,t\right) }{\partial t}+\left( \boldsymbol{D}%
_{0}^{\alpha }-\lambda \right) u\left( x,t\right) =0\text{, } & x\in \mathbb{%
Z}_{p}, & t>0; \\ 
&  &  \\ 
u\left( x,0\right) =\varphi \left( x\right) , & x\in \mathbb{Z}_{p}, & 
\end{array}%
\right.
\end{equation*}%
where $\mathbb{\varphi \in }\mathcal{D}(\mathbb{Z}_{p})$. The solution of
this problem is given by%
\begin{equation*}
u\left( x,t\right) =\int\limits_{\mathbb{Z}_{p}}Z_{0}(x-y,t)\varphi \left(
y\right) dy\text{, }x\in \mathbb{Z}_{p}\text{, }t>0,
\end{equation*}%
where 
\begin{equation*}
Z_{0}(x,t):=e^{\lambda t}Z(x,t)+c(t)\text{, }x\in \mathbb{Z}_{p}\text{, }t>0,
\end{equation*}%
\begin{equation*}
c(t):=1-(1-p^{-1})e^{\lambda t}\sum_{n=0}^{\infty }\frac{(-1)^{n}}{n!}t^{n}%
\frac{1}{1-p^{-n\alpha -1}},
\end{equation*}%
and $Z(x,t)$ is given (\ref{het-kernel}). The function $Z_{0}(x,t)$ is
non-negative for $x\in \mathbb{Z}_{p}$, $t>0$, and 
\begin{equation*}
\int\limits_{\mathbb{Z}_{p}}Z_{0}(x,t)dx=1,
\end{equation*}%
\cite{Kochubei}. Furthermore, $Z_{0}(x,t)$ is the transition density
function of a Markov process with space state $\mathbb{Z}_{p}$.

\subsection{\label{Section_Spectrum}$p$-adic wavelets supported in balls}

The set of functions $\left\{ \Psi _{rnj}\right\} $ defined as%
\begin{equation}
\Psi _{rnj}\left( x\right) =p^{\frac{-r}{2}}\chi _{p}\left( p^{-1}j\left(
p^{r}x-n\right) \right) \Omega \left( \left\vert p^{r}x-n\right\vert
_{p}\right) ,  \label{eq4}
\end{equation}%
where $r\in \mathbb{Z}$, $j\in \left\{ 1,\cdots ,p-1\right\} $, and $n$ runs
through a fixed set of representatives of $\mathbb{Q}_{p}/\mathbb{Z}_{p}$,
is an orthonormal basis of $L^{2}(\mathbb{Q}_{p})$ consisting of
eigenvectors of operator $\boldsymbol{D}^{\alpha }:$ 
\begin{equation*}
\boldsymbol{D}^{\alpha }\Psi _{rnj}=p^{(1-r)\alpha }\Psi _{rnj}\text{ for
any }r,n,j,
\end{equation*}%
see, e.g., \cite[Theorem 3.29]{KKZuniga}, \cite[Theorem 9.4.2]{Alberio et al}%
. By using this basis, it is possible to construct an orthonormal basis for $%
L^{2}(\mathbb{Z}_{p})$:

\begin{proposition}[{\protect\cite[Propositions 1, 2]{Zuniga-EigenPardox}}]
The set of functions%
\begin{equation}
\left\{ \Omega\left( \left\vert x\right\vert _{p}\right) \right\}
\bigcup\bigcup\limits_{j\in\left\{ 1,\ldots,p-1\right\} }\text{ }%
\bigcup\limits_{r\leq0}\text{ }\bigcup\limits_{\substack{ np^{-r}\in \mathbb{%
Z}_{p}  \\ n\in\mathbb{Q}_{p}/\mathbb{Z}_{p}}}\left\{ \Psi_{rnj}\left(
x\right) \right\}  \label{eq4A}
\end{equation}
is an orthonormal basis of $L^{2}\left( \mathbb{Z}_{p}\right) $.
Furthermore, 
\begin{equation}
L^{2}(\mathbb{Z}_{p})=\mathbb{C}\Omega\left( \left\vert x\right\vert
_{p}\right) \bigoplus L_{0}^{2}(\mathbb{Z}_{p}),  \label{Resultado 2}
\end{equation}
where 
\begin{equation*}
L_{0}^{2}(\mathbb{Z}_{p})=\left\{ f\in L^{2}(\mathbb{Z}_{p});\int \limits_{%
\mathbb{Z}_{p}}f\text{ }dx=0\right\} .
\end{equation*}
\end{proposition}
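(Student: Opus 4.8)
The plan is to deduce the statement from the Kozyrev wavelet basis recalled above, namely that $\{\Psi_{rnj}\}_{r\in\mathbb{Z},\,j\in\{1,\dots,p-1\},\,n\in\mathbb{Q}_{p}/\mathbb{Z}_{p}}$ is an orthonormal basis of $L^{2}(\mathbb{Q}_{p})$. First I would pin down the support of $\Psi_{rnj}$: since $\Omega(|p^{r}x-n|_{p})=1$ exactly when $|x-p^{-r}n|_{p}\le p^{r}$, the function $\Psi_{rnj}$ is supported on the ball $B_{r}(p^{-r}n)$, and by the ultrametric inequality this ball is contained in $\mathbb{Z}_{p}=B_{0}$ precisely when $r\le 0$ and $p^{-r}n\in\mathbb{Z}_{p}$ --- which is exactly the index range appearing in (\ref{eq4A}). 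Hence each function listed in (\ref{eq4A}) lies in $L^{2}(\mathbb{Z}_{p})$ (extended by zero), these functions are mutually orthonormal as a subfamily of an orthonormal system, $\Omega(|x|_{p})$ has unit $L^{2}(\mathbb{Z}_{p})$-norm, and $\Omega(|x|_{p})$ is orthogonal to each such $\Psi_{rnj}$ because $\int_{B_{r}(p^{-r}n)}\chi_{p}(p^{-1}j(p^{r}x-n))\,dx=0$, the character $x\mapsto\chi_{p}(p^{r-1}jx)$ being nontrivial on the support ball for $j\in\{1,\dots,p-1\}$. So (\ref{eq4A}) is an orthonormal system in $L^{2}(\mathbb{Z}_{p})$, and the same computation shows every wavelet occurring in (\ref{eq4A}) has zero integral over $\mathbb{Z}_{p}$, i.e. lies in $L_{0}^{2}(\mathbb{Z}_{p})$.

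The substantive point is completeness, which I would establish by a finite-dimensional exhaustion rather than by restricting the $L^{2}(\mathbb{Q}_{p})$ expansion of an extended function. For $m\ge 0$ let $E_{m}\subset L^{2}(\mathbb{Z}_{p})$ be the $p^{m}$-dimensional space of functions that are constant modulo $p^{m}\mathbb{Z}_{p}$, equivalently the linear span of $\{1_{B_{-m}(a)}:a\in\mathbb{Z}_{p}/p^{m}\mathbb{Z}_{p}\}$. A short computation with $\chi_{p}$ shows that $\Psi_{rnj}$ is constant on cosets of $p^{1-r}\mathbb{Z}_{p}$, so for $-m\le r\le 0$ together with the support condition we get $\Psi_{rnj}\in E_{m+1}$, and trivially $\Omega(|x|_{p})\in E_{m+1}$. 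Counting the admissible indices with $-m\le r\le 0$ gives $1+\sum_{k=0}^{m}(p-1)p^{k}=p^{m+1}$ functions; since they are orthonormal they are linearly independent, and $\dim E_{m+1}=p^{m+1}$, so they span $E_{m+1}$ exactly. As every test function on $\mathbb{Z}_{p}$ lies in some $E_{m}$, the union $\bigcup_{m}E_{m}$ is dense in $L^{2}(\mathbb{Z}_{p})$; hence (\ref{eq4A}) is an orthonormal basis.

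The decomposition (\ref{Resultado 2}) then follows immediately: for $f\in L^{2}(\mathbb{Z}_{p})$ one has $\langle f,\Omega(|x|_{p})\rangle=\int_{\mathbb{Z}_{p}}f\,dx$, so $L_{0}^{2}(\mathbb{Z}_{p})$ is exactly the orthogonal complement of the line $\mathbb{C}\,\Omega(|x|_{p})$ in $L^{2}(\mathbb{Z}_{p})$, whence $L^{2}(\mathbb{Z}_{p})=\mathbb{C}\,\Omega(|x|_{p})\oplus L_{0}^{2}(\mathbb{Z}_{p})$; and since all the wavelets in (\ref{eq4A}) already lie in $L_{0}^{2}(\mathbb{Z}_{p})$, they form an orthonormal basis of $L_{0}^{2}(\mathbb{Z}_{p})$.

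I expect the only real difficulty to be bookkeeping: getting the support ball $B_{r}(p^{-r}n)$ and the exponent of local constancy $p^{1-r}\mathbb{Z}_{p}$ of $\Psi_{rnj}$ exactly right, so that the index ranges in (\ref{eq4A}) match and the count $1+\sum_{k=0}^{m}(p-1)p^{k}=p^{m+1}$ coincides with $\dim E_{m+1}$. Here one must use carefully the normalization $\mathrm{vol}(B_{r})=p^{r}$ and the standard formula $\int_{B_{r}}\chi_{p}(\xi x)\,dx=p^{r}$ if $|\xi|_{p}\le p^{-r}$ and $0$ otherwise.
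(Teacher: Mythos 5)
Your argument is correct, and it is worth noting that the paper itself offers no proof of this proposition: it is imported verbatim from \cite[Propositions 1, 2]{Zuniga-EigenPardox}, so you have supplied a self-contained argument where the authors defer to a reference. Your bookkeeping checks out: the support of $\Psi_{rnj}$ is the ball $B_{r}(p^{-r}n)$, which sits inside $\mathbb{Z}_{p}$ exactly when $r\le 0$ and $p^{-r}n\in \mathbb{Z}_{p}$; the vanishing of $\int_{B_{r}(p^{-r}n)}\chi_{p}(p^{-1}j(p^{r}x-n))\,dx$ follows since $|p^{r-1}j|_{p}=p^{1-r}>p^{-r}$, so each wavelet lies in $L_{0}^{2}(\mathbb{Z}_{p})$ and is orthogonal to $\Omega(|x|_{p})$; and the count $1+\sum_{k=0}^{m}(p-1)p^{k}=p^{m+1}=\dim E_{m+1}$ is right, since for $r=-k$ the admissible $n$ are the $p^{k}$ fractional parts with $\mathrm{ord}(n)\ge -k$. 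The cited source (and the usual treatments of Kozyrev wavelets on compact balls) obtains completeness by restricting the $L^{2}(\mathbb{Q}_{p})$ wavelet expansion of a function extended by zero and showing that the coefficients of the wavelets not supported in $\mathbb{Z}_{p}$ conspire to reproduce only the average, i.e.\ the $\Omega$-component; your finite-dimensional exhaustion by the spaces $E_{m}$ of functions constant modulo $p^{m}\mathbb{Z}_{p}$ replaces that analysis by a pure dimension count plus density of $\mathcal{D}(\mathbb{Z}_{p})$ in $L^{2}(\mathbb{Z}_{p})$, which is more elementary and makes the splitting (\ref{Resultado 2}) transparent, since $\langle f,\Omega(|x|_{p})\rangle=\int_{\mathbb{Z}_{p}}f\,dx$ identifies $L_{0}^{2}(\mathbb{Z}_{p})$ as the orthogonal complement of $\mathbb{C}\,\Omega(|x|_{p})$. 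The only price is that your route is special to the compact ball, whereas the restriction argument generalizes to other invariant subspaces.
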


Now, by using (\ref{TV-Ball}), (\ref{eq4}), (\ref{eq4A}), the functions in (%
\ref{eq4A}) are eigenfunctions of $\boldsymbol{D}_{0}^{\alpha }-\lambda $:%
\begin{equation}
\left( \boldsymbol{D}_{0}^{\alpha }-\lambda \right) \Psi _{rnj}=\left(
p^{(1-r)\alpha }-\lambda \right) \Psi _{rnj}\text{ }  \label{Resultado 3}
\end{equation}%
for any $r\leq 0$, $n\in p^{r}\mathbb{Z}_{p}\cap \mathbb{Q}_{p}/\mathbb{Z}%
_{p}$, $j\in \{1,\ldots ,p-1\}$, and 
\begin{equation}
\left( \boldsymbol{D}_{0}^{\alpha }-\lambda \right) \Omega \left( \left\vert
x\right\vert _{p}\right) =0\text{, for }x\in \mathbb{Z}_{p}.
\label{Resultado 4}
\end{equation}

\subsubsection{\label{Section_Eigenvalue_Problem} \textbf{An eigenvalue
problem in the unit ball}}

We now consider the following eigenvalue problem: 
\begin{equation}
\left\{ 
\begin{array}{ll}
\left( \boldsymbol{D}_{0}^{\alpha }-\lambda \right) \theta \left( x\right)
=\kappa \theta \left( x\right) \text{,} & \kappa \in \mathbb{R} \\ 
&  \\ 
\theta \in L_{\mathbb{R}}^{2}\left( \mathbb{Z}_{p}\right) . & 
\end{array}%
\right.  \label{E_Value_1}
\end{equation}%
By using (\ref{Resultado 3}), the functions $\Psi _{rnj}\left( x\right) $
given in (\ref{eq4A}) are complex-valued eigenfunctions of (\ref{E_Value_1})
with eigenvalues $\kappa \in \left\{ p^{\left( 1-r\right) \alpha }-\lambda
;r\leq 0\right\} $. Therefore, 
\begin{equation*}
\begin{split}
& p^{\frac{-r}{2}}\cos \left( \left\{ p^{r-1}jx-p^{-1}nj\right\} _{p}\right)
\Omega \left( \left\vert p^{r}x-n\right\vert _{p}\right) \text{, } \\
& p^{\frac{-r}{2}}\sin \left( \left\{ p^{r-1}jx-p^{-1}nj\right\} _{p}\right)
\Omega \left( \left\vert p^{r}x-n\right\vert _{p}\right) \text{,}
\end{split}%
\end{equation*}%
with $|p^{-r}n|_{p}\leq 1$ and $r\leq 0$, $n\in p^{r}\mathbb{Z}_{p}\cap 
\mathbb{Q}_{p}/\mathbb{Z}_{p}$, $j\in \{1,\ldots ,p-1\}$, are real-valued
eigenfunctions of (\ref{E_Value_1}) with $\kappa =p^{\left( 1-r\right)
\alpha }-\lambda $. By (\ref{Resultado 4}), $\Omega \left( \left\vert
x\right\vert _{p}\right) $ is a n eigenvalue for $\kappa =0$. Furthermore,
any $f(x)\in L_{\mathbb{R}}^{2}\left( \mathbb{Z}_{p}\right) $ admits an
expansion of the form 
\begin{align}
f(x)=& \sum\limits_{rnj}p^{\frac{-r}{2}}\text{Re}(A_{rnj})\cos \left(
\left\{ p^{r-1}jx-p^{-1}nj\right\} _{p}\right) \Omega \left( \left\vert
p^{r}x-n\right\vert _{p}\right)  \notag \\
& -\sum\limits_{rnj}p^{\frac{-r}{2}}\text{Im}(A_{rnj})\sin \left( \left\{
p^{r-1}jx-p^{-1}nj\right\} _{p}\right) \Omega \left( \left\vert
p^{r}x-n\right\vert _{p}\right)  \label{E_expansio} \\
& +A_{0}\Omega \left( \left\vert x\right\vert _{p}\right)  \notag
\end{align}%
where%
\begin{equation*}
\text{Re}(A_{rnj})=p^{\frac{-r}{2}}\int\limits_{\mathbb{Z}_{p}}f\left(
x\right) \cos \left( \left\{ p^{r-1}jx-p^{-1}nj\right\} _{p}\right) \Omega
\left( \left\vert p^{r}x-n\right\vert _{p}\right) dx,
\end{equation*}%
\begin{equation*}
\text{Im}(A_{rnj})=p^{\frac{-r}{2}}\int\limits_{\mathbb{Z}_{p}}f\left(
x\right) \sin \left( \left\{ p^{r-1}jx-p^{-1}nj\right\} _{p}\right) \Omega
\left( \left\vert p^{r}x-n\right\vert _{p}\right) dx\text{, }
\end{equation*}%
and 
\begin{equation*}
A_{0}=\int\limits_{\mathbb{Z}_{p}}f(x)dx.
\end{equation*}

\section{\label{Section_4} A $p$-adic FitzHugh-Nagumo system on $\mathbb{Z}%
_{p}$}

A reaction-diffusion system exhibits diffusion-driven instability, or Turing
instability, if the homogeneous steady state is stable to small
perturbations in the absence of diffusion but unstable to small spatial
perturbations when diffusion is present. The main process driving the
spatially inhomogeneous instability is diffusion: the mechanism determines
the spatial pattern that evolves. For Turing instability, we require that
the system is stable in the absence of diffusion.

From now on, we set $u(x,t),v(x,t):\mathbb{Z}_{p}\times \left[ 0,\infty
\right) \rightarrow \mathbb{R}$. We consider the following FitzHugh-Nagumo
system with $p$-adic diffusion: 
\begin{equation}
\left\{ 
\begin{array}{l}
u(\cdot ,t),v(\cdot ,t)\in L_{\mathbb{R}}^{2}(\mathbb{Z}_{p})\text{,}\ \text{%
\textup{for }}t\geq 0; \\ 
\\ 
u(x,0),v(x,0)\in L_{\mathbb{R}}^{2}(\mathbb{Z}_{p})\text{,}\ \ \
u(x,0),v(x,0)\text{, }x\in \mathbb{Z}_{p}; \\ 
\\ 
{{\frac{\partial u}{\partial t}}(x,t)=f(u,v)-}\left( {\boldsymbol{D}%
_{0}^{\alpha }-\lambda }\right) {u(x,t)}\text{,}\ \text{ \ }x\in \mathbb{Z}%
_{p}\text{, }t\geq 0; \\ 
\\ 
{{\frac{\partial v}{\partial t}}(x,t)=g(u,v)-d\left( {\boldsymbol{D}%
_{0}^{\alpha }-\lambda }\right) v(x,t)}\text{,}\ \ \text{ }x\in \mathbb{Z}%
_{p}\text{, }\ t\geq 0,%
\end{array}%
\right.  \label{1M}
\end{equation}%
where 
\begin{equation}
f(u,v)=\mu u-u^{3}-v\text{, }g(u,v)=\gamma (u-\delta v-\beta ),
\label{Definition_f_g}
\end{equation}%
and $\mu $, $\beta $, $\gamma \neq 0$, $\delta \neq 0$, $d$ are real numbers.

\subsection{Homogeneous steady states}

We now consider a homogeneous steady state (also called an equilibrium
point) of (\ref{1M}) which is a positive $(u_{0},v_{0})$ solution of 
\begin{equation}
\left\{ 
\begin{array}{l}
{\frac{\partial u}{\partial t}=f(u,v)},\quad t\geq 0 \\ 
\\ 
{\frac{\partial v}{\partial t}}=g(u,v),\quad t\geq 0.%
\end{array}%
\right.  \label{s1}
\end{equation}%
The equilibrium points associated with (\ref{s1}) are given by 
\begin{equation}
\left\{ 
\begin{array}{l}
\mu u-u^{3}-v=0 \\ 
\\ 
\gamma (u-\delta v-\beta )=0.%
\end{array}%
\right.  \label{s_h}
\end{equation}%
\newline
Using the substitution method on (\ref{s_h}), we have that (\ref{s_h}) is
equivalent to%
\begin{equation}
u^{3}+\eta u+\tau =0,  \label{s_h_1}
\end{equation}%
\newline
where $\eta :=\frac{1-\delta \mu }{\delta }$ and $\tau :=-\frac{\beta }{%
\delta }$. Here we use the hypothesis that $\gamma \neq 0$, $\delta \neq 0$%
.\ We denote by $u_{0}$ a real solution of (\ref{s_h_1}). Then $%
(u_{0},v_{0}) $, with $v_{0}=\frac{u_{0}-\beta }{\delta }$, is the real
equilibrium point of (\ref{s1}).

We denote by $\sigma _{\text{eigen}}\left( \boldsymbol{D}_{0}^{\alpha
}-\lambda \right) $ to the set of eigenvalues of $\boldsymbol{D}_{0}^{\alpha
}-\lambda $. We also set%
\begin{equation}
\kappa _{1}=\frac{1}{2d}\left\{ \left( d\left( \mu -3u_{0}^{2}\right)
-\gamma \delta \right) -\sqrt{\left( d\left( \mu -3u_{0}^{2}\right) -\gamma
\delta \right) ^{2}-4ddet (A)}\right\} ,  \label{k1}
\end{equation}%
\begin{equation}
\kappa _{2}=\frac{1}{2d}\left\{ \left( d\left( \mu -3u_{0}^{2}\right)
-\gamma \delta \right) +\sqrt{\left( d\left( \mu -3u_{0}^{2}\right) -\gamma
\delta \right) ^{2}-4ddet (A)}\right\} ,  \label{k2}
\end{equation}%
where 
\begin{equation*}
A=\left[ 
\begin{array}{ll}
f_{u} & f_{v} \\ 
g_{u} & g_{v}%
\end{array}%
\right] _{u=u_{0},v=v_{0}}:=\left[ 
\begin{array}{ll}
f_{u_{0}} & f_{v_{0}} \\ 
g_{u_{0}} & g_{v_{0}}%
\end{array}%
\right] .
\end{equation*}%
Notice that $A$ is the Jacobian matrix of the mapping $(u,v)\rightarrow
\left( f\left( u,v\right) ,g\left( u,v\right) \right) $. A straightforward
calculation shows that 
\begin{equation}
A=\left[ 
\begin{array}{cc}
\mu -3u_{0}^{2} & -1 \\ 
\gamma & -\gamma \delta%
\end{array}%
\right] .  \label{Matrix_A}
\end{equation}

\begin{theorem}
\label{Theorem 1}Consider the reaction-diffusion system (\ref{1M}). The
steady state $\left( u_{0},v_{0}\right) $ is linearly unstable (Turing
unstable), if the following conditions hold:

\begin{enumerate}
\item $Tr(A)=\mu -3u_{0}^{2}-\gamma \delta <0$ ;

\item $det (A)=-\mu \gamma \delta +3\gamma \delta u_{0}^{2}+\gamma >0$ ;

\item $d\left( \mu -3u_{0}^{2}\right) -\gamma \delta >0$;

\item The derivatives $\mu -3u_{0}^{2}$ and $-\gamma \delta $ must have
opposite signs;

\item $\left( d\left( \mu -3u_{0}^{2}\right) -\gamma \delta \right)
^{2}-4d\left( -\mu \gamma \delta +3\gamma \delta u_{0}^{2}+\gamma \right) >0$
;

\item $\Gamma =\{\kappa \in \sigma _{\text{eigen}}\left( \boldsymbol{D}%
_{0}^{\alpha }-\lambda \right) ;\kappa _{1}<\kappa <\kappa _{2}\}\neq
\emptyset $.
\end{enumerate}

Furthermore, there are infinitely many unstable eigenmodes, and the Turing
pattern has the form (\ref{17}).
\end{theorem}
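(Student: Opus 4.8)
The plan is to follow the classical Turing instability analysis adapted to the $p$-adic setting, using the spectral decomposition of $\boldsymbol{D}_{0}^{\alpha}-\lambda$ from Section \ref{Section_Spectrum}. First I would linearize the system (\ref{1M}) about the steady state $(u_{0},v_{0})$: writing $u=u_{0}+\widetilde{u}$, $v=v_{0}+\widetilde{v}$ with small perturbations in $L_{\mathbb{R}}^{2}(\mathbb{Z}_{p})$, and discarding higher-order terms, one obtains a linear system
\begin{equation*}
\frac{\partial}{\partial t}\begin{pmatrix}\widetilde{u}\\ \widetilde{v}\end{pmatrix}
= A\begin{pmatrix}\widetilde{u}\\ \widetilde{v}\end{pmatrix}
- \begin{pmatrix}\boldsymbol{D}_{0}^{\alpha}-\lambda & 0\\ 0 & d(\boldsymbol{D}_{0}^{\alpha}-\lambda)\end{pmatrix}\begin{pmatrix}\widetilde{u}\\ \widetilde{v}\end{pmatrix},
\end{equation*}
with $A$ the Jacobian (\ref{Matrix_A}). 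The crucial structural fact is that $\boldsymbol{D}_{0}^{\alpha}-\lambda$ has the explicit orthonormal eigenbasis (\ref{eq4A}) with eigenvalues $\kappa\in\{0\}\cup\{p^{(1-r)\alpha}-\lambda:r\leq0\}$, so I would expand $\widetilde{u},\widetilde{v}$ in this basis (equivalently in the real eigenfunctions of (\ref{E_Value_1}), using (\ref{E_expansio})) and observe that each eigenmode $\kappa$ decouples, evolving according to $\partial_{t}\mathbf{c}_{\kappa}=(A-\kappa\,\mathrm{diag}(1,d))\mathbf{c}_{\kappa}=:M_{\kappa}\mathbf{c}_{\kappa}$.

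Next I would carry out the stability bookkeeping for the $2\times2$ matrices $M_{\kappa}$. Conditions (1) and (2), namely $\mathrm{Tr}(A)<0$ and $\det(A)>0$, are exactly the Routh–Hurwitz conditions guaranteeing that the $\kappa=0$ mode (the spatially homogeneous perturbation, corresponding to $\Omega(|x|_{p})$) is stable; this encodes the requirement that the kinetics alone are stable. For $\kappa>0$ one has $\mathrm{Tr}(M_{\kappa})=\mathrm{Tr}(A)-\kappa(1+d)<0$ automatically, so instability of the mode $\kappa$ can only come from $\det(M_{\kappa})<0$. A direct computation gives
\begin{equation*}
\det(M_{\kappa})=d\kappa^{2}-\bigl(d(\mu-3u_{0}^{2})-\gamma\delta\bigr)\kappa+\det(A),
\end{equation*}
a quadratic in $\kappa$ opening upward; it takes negative values precisely when it has two real roots and $\kappa$ lies strictly between them. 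Conditions (3) (positive sum of roots, i.e. the $\kappa$-coefficient has the right sign) and (5) (positive discriminant) are exactly what is needed for two positive real roots to exist, and those roots are $\kappa_{1},\kappa_{2}$ as defined in (\ref{k1})–(\ref{k2}); condition (4) — that $\mu-3u_{0}^{2}$ and $-\gamma\delta$ have opposite signs — is the standard activator–inhibitor sign pattern that, together with (1) and (2), is the qualitative hypothesis making (3) and (5) achievable (one should verify that (1)–(5) are mutually consistent, which is the content of the classical Turing argument). Then condition (6) — that $\Gamma=\{\kappa\in\sigma_{\mathrm{eigen}}(\boldsymbol{D}_{0}^{\alpha}-\lambda):\kappa_{1}<\kappa<\kappa_{2}\}$ is nonempty — supplies at least one actual eigenmode in the unstable band, so the linearized system has a genuinely growing solution, which is the definition of Turing instability.

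For the "furthermore" claims I would argue as follows. Since the nonzero eigenvalues $p^{(1-r)\alpha}-\lambda$ for $r\leq0$ form a strictly increasing sequence accumulating at a single finite point is false — in fact they increase without bound as $r\to-\infty$ — wait: more carefully, as $r$ decreases through $0,-1,-2,\dots$ the quantity $p^{(1-r)\alpha}$ increases to $+\infty$, but each eigenvalue $p^{(1-r)\alpha}-\lambda$ has \emph{infinite multiplicity} (indexed by the admissible $n$ and by $j\in\{1,\dots,p-1\}$), because the number of centers $n$ with $np^{-r}\in\mathbb{Z}_{p}$ grows as $r$ decreases. Hence once $\Gamma\neq\emptyset$, any single eigenvalue $\kappa_{*}\in\Gamma$ already contributes infinitely many linearly independent unstable eigenmodes $\Psi_{rnj}$ (equivalently the real $\cos/\sin$ eigenfunctions), which gives the "infinitely many unstable eigenmodes" assertion. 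The Turing pattern is then the sum of the exponentially growing contributions $e^{t M_{\kappa}}$ applied to the projections of the initial perturbation onto the eigenspaces with $\kappa\in\Gamma$; assembling the explicit eigenfunction expansion (\ref{E_expansio}) with the dominant eigenvalue $\sigma(\kappa)=\tfrac{1}{2}\bigl(\mathrm{Tr}(M_{\kappa})+\sqrt{\mathrm{Tr}(M_{\kappa})^{2}-4\det(M_{\kappa})}\bigr)>0$ of each $M_{\kappa}$ yields the claimed form (\ref{17}). The main obstacle I anticipate is not any single computation but marshalling the consistency of conditions (1)–(6): one must check that the activator–inhibitor sign condition (4) together with (1)–(2) genuinely permits (3) and (5), i.e. that the unstable band $(\kappa_{1},\kappa_{2})\subset(0,\infty)$ is well-defined and can be made to intersect the (discrete, unbounded) spectrum — and separately, making the passage from "a growing eigenmode exists" to "the Turing pattern has the form (\ref{17})" precise requires care about which modes dominate and about convergence of the expansion in $L_{\mathbb{R}}^{2}(\mathbb{Z}_{p})$.
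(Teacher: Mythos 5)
Your overall strategy is exactly the one the paper intends: its own proof of Theorem \ref{Theorem 1} is only a two-line pointer to \cite{W-morfo} and \cite{zuniga2020reaction}, but the argument behind it is the mode-decomposition/Routh--Hurwitz analysis you carry out, and it coincides with the computation the paper does spell out for the discrete case (proof of Theorem \ref{TheoremB}): your $\det(M_{\kappa})$ is the paper's $h(\kappa)$ in (\ref{Eq_h_k}), and your reading of conditions (1)--(5) as ``homogeneous mode stable, and $h$ has two positive real roots $\kappa_{1},\kappa_{2}$'' (with (4) forced by (1) and (3)) is the intended one, as is the role of condition (6).

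There is, however, a concrete error in your justification of the ``infinitely many unstable eigenmodes'' clause. You assert that each nonzero eigenvalue $p^{(1-r)\alpha}-\lambda$ of $\boldsymbol{D}_{0}^{\alpha}-\lambda$ has infinite multiplicity. That is true for $\boldsymbol{D}^{\alpha}$ on $L^{2}(\mathbb{Q}_{p})$, where $n$ ranges over all of $\mathbb{Q}_{p}/\mathbb{Z}_{p}$, but not on the unit ball: the basis (\ref{eq4A}) admits only those $n$ with $np^{-r}\in\mathbb{Z}_{p}$, i.e.\ $|n|_{p}\leq p^{-r}$, and for fixed $r\leq 0$ there are exactly $p^{-r}$ such representatives. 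Hence the eigenspace for $\kappa=p^{(1-r)\alpha}-\lambda$ has finite dimension $(p-1)p^{-r}$; moreover, since $p^{(1-r)\alpha}-\lambda\to\infty$ as $r\to-\infty$ while the band $(\kappa_{1},\kappa_{2})$ is bounded, $\Gamma$ contains only finitely many distinct eigenvalues. So the argument you give produces only finitely many unstable eigenmodes, and the ``infinitely many'' assertion either needs a justification specific to $\mathbb{Z}_{p}$ or is an artifact of the $\mathbb{Q}_{p}$ setting of the cited references; in either case your stated reason for it does not hold. The remainder of your argument (decoupling of the modes, the sign analysis of $\mathrm{Tr}(M_{\kappa})$ and $\det(M_{\kappa})$, and the assembly of the growing modes into the form (\ref{17})) is sound and matches the paper.
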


\begin{proof}
The proof is similar to the one given in \cite{W-morfo}-\cite%
{zuniga2020reaction}. However, in \cite{W-morfo}, the Turing pattern is a
function from the subspace\ of $L_{\mathbb{R}}^{2}\left( \mathbb{Z}%
_{p}\right) $ consisting of functions with average zero, while here, the
pattern is a function from $L_{\mathbb{R}}^{2}\left( \mathbb{Z}_{p}\right) $%
. For further details the reader may consult the above mentioned references.
The Turing pattern $w(x,t)$ has the form%
\begin{align}
w(x,t)& \thicksim \sum_{\kappa _{1}<\kappa <\kappa
_{2}}\sum_{r,n}A_{rn}e^{\rho (\kappa )t}\Omega \left( \left\vert
p^{r}x-n\right\vert _{p}\right)  \label{17} \\
& +\sum_{\kappa _{1}<\kappa <\kappa _{2}}\sum_{r,n,j}A_{rnj}e^{\rho (\kappa
)t}p^{-\frac{r}{2}}\cos \left( \left\{ p^{-1}j\left( p^{r}x-n\right)
\right\} _{p}\right) \Omega \left( \left\vert p^{r}x-n\right\vert _{p}\right)
\notag \\
& +\sum_{\kappa _{1}<\kappa <\kappa _{2}}\sum_{r,n,j}B_{rnj}e^{\rho (\kappa
)t}p^{-\frac{r}{2}}\sin \left( \left\{ p^{-1}j\left( p^{r}x-n\right)
\right\} _{p}\right) \Omega \left( \left\vert p^{r}x-n\right\vert
_{p}\right) ,  \notag
\end{align}%
for $t\rightarrow \infty $, where $\rho (\kappa )$ are eigenvalues of matrix 
$A$ depending on $\kappa \in \sigma _{\text{eigen}}\left( \boldsymbol{D}%
_{0}^{\alpha }-\lambda \right) $, with $\text{Re}(\rho (\kappa
))>0 $.
\end{proof}

\section{\label{Section_5} Discrete FitzHugh-Nagumo systems}

\subsection{The Spaces $\mathcal{D}_{L}$}

We fix $L\in \mathbb{N\smallsetminus }\left\{ 0\right\} $, and define 
\begin{equation*}
G_{L}=\mathbb{Z}_{p}/p^{L}\mathbb{Z}_{p}.
\end{equation*}%
Then, $G_{L}$ is a finite ring, with $\#G_{L}=p^{L}$ elements. We set the
following set of representatives for the elements of $G_{L}$: 
\begin{equation*}
I=I_{0}+I_{q}p^{1}+\ldots +I_{L-1}p^{L-1},
\end{equation*}%
where the $I_{j}$s are $p$-adic digits, i.e., elements from $\{0,1,\ldots
,p-1\}$. We define $\mathcal{D}_{L}$ to be the space of test functions $%
\varphi $ supported in the unit ball having the form 
\begin{equation}
\varphi (x)=p^{\frac{L}{2}}\sum_{I\in G_{L}}\varphi (I)\Omega \left(
p^{L}|x-I|_{p}\right) \text{, with }\varphi (I)\in \mathbb{R}\text{. }
\label{D_{L}}
\end{equation}%
Since $\Omega \left( p^{L}|x-I|_{p}\right) \Omega \left(
p^{L}|x-J|_{p}\right) =0$ if $I\neq J$, the set 
\begin{equation*}
\left\{ p^{\frac{L}{2}}\Omega \left( p^{L}|x-I|_{p}\right) ;I\in
G_{L,M}\right\}
\end{equation*}%
is an orthonormal basis for $\mathcal{D}_{L}$. Then, by using that%
\begin{eqnarray*}
\Vert \varphi \Vert _{L^{2}} &=&\sqrt{p^{L}\sum_{I\in G_{L}}|\varphi
(I)|^{2}\int\limits_{\mathbb{Z}_{p}}\Omega \left( p^{L}|x-I|_{p}\right) dx}
\\
&=&\sqrt{\sum_{I\in G_{L}}|\varphi (I)|^{2}},
\end{eqnarray*}%
we have 
\begin{equation*}
\left( \mathcal{D}_{L},\Vert \cdot \Vert _{L^{2}}\right) \simeq \left( 
\mathbb{R}^{\#G_{L}},\left\vert \cdot \right\vert _{\mathbb{R}}\right) \text{
as Hilbert spaces, }
\end{equation*}%
where $\left\vert \cdot \right\vert _{\mathbb{R}}$ denotes the usual norm of 
$\mathbb{R}^{\#G_{L}}$.

\subsection{Discretization of the operator $\boldsymbol{D}_{0}^{\protect%
\alpha }-\protect\lambda $}

A natural discretization of $\boldsymbol{D}_{0}^{\alpha }-\lambda $ is
obtained by taking its restriction to $\mathcal{D}_{L}$. We denote this
restriction by $\boldsymbol{D}_{L}^{\alpha }-\lambda $. Since $\mathcal{D}%
_{L}$ is a finite vector space, $\boldsymbol{D}_{L}^{\alpha }-\lambda $ is
represented by a matrix $A_{L}^{\alpha }=\left[ A_{K,I}^{\alpha }\right]
_{K,I\in G_{L}}$, where 
\begin{equation}
A_{K,I}^{\alpha }=\left\{ 
\begin{array}{ll}
{p^{-\frac{L}{2}}\frac{1-p^{\alpha }}{1-p^{-\alpha -1}}\frac{1}{%
|K-I|_{p}^{\alpha +1}}} & \text{ if }\quad K\neq I; \\ 
&  \\ 
{-p^{-\frac{L}{2}}\frac{1-p^{\alpha }}{1-p^{-\alpha -1}}{\sum_{K\neq I}}%
\frac{1}{|K-I|_{p}^{\alpha +1}}} & \text{ if }\quad K=I,%
\end{array}%
\right.  \label{Discreto}
\end{equation}%
see \cite{W-morfo}. 

\begin{figure}[H]
\label{Figure 1}
\centering
\includegraphics[scale=0.4]{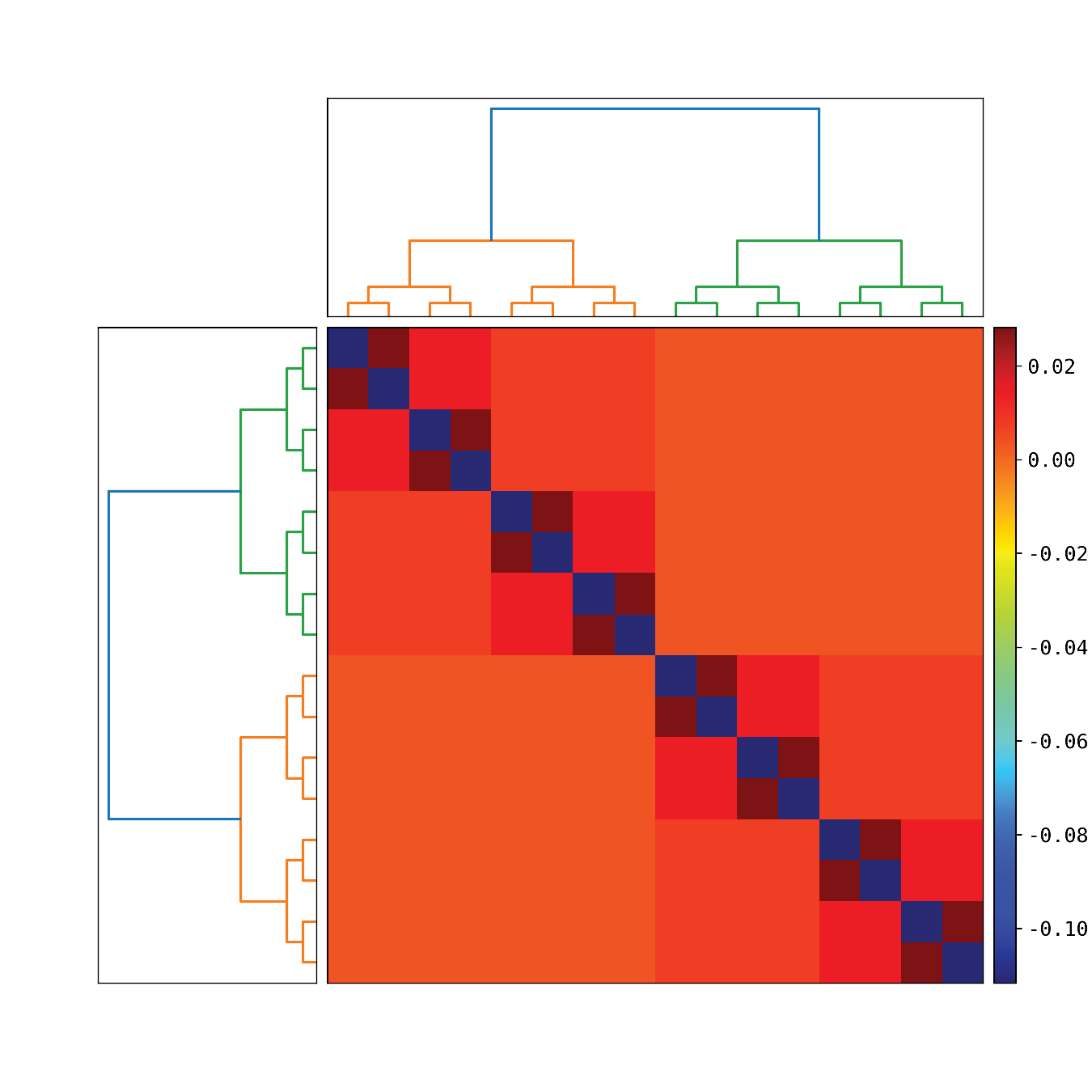}
\caption{The heat map for matrix $A_{L}^{\protect\alpha }$; $p=2,L=4,M=0,\protect\alpha =0.01$. The vertical and horizontal scales run through the points of tree $G_{4}$.}
\end{figure}

\subsection{Discretization of the $p$-adic Turing System (\protect\ref{1M})}

A discretization of the Turing system (\ref{1M}) is obtained by
approximating the functions $u(x,t)$, $v(x,t)$ as

\begin{equation*}
u_{L}(x,t)=\sum_{I\in G_{L}}u_{L}(I,t)\Omega \left( p^{L}|x-I|_{p}\right)
\end{equation*}%
and 
\begin{equation*}
v_{L}(x,t)=\sum_{I\in G_{L}}v_{L}(I,t)\Omega \left( p^{L}|x-I|_{p}\right) ,
\end{equation*}%
where $u_{L}(I,\cdot )$, $v_{L}(I,\cdot )\in C^{1}([0,T])$ for some fixed
positive $T$. We set%
\begin{equation*}
u_{L}(x,t)=\left[ u_{L}(I,t)\right] _{I\in G_{L}}\text{, }\ v_{L}(x,t)=\left[
v_{L}(I,t)\right] _{I\in G_{L}}.
\end{equation*}%
Notice that%
\begin{gather*}
f\left( \sum_{I\in G_{L}}u_{L}(I,t)\Omega \left( p^{L}|x-I|_{p}\right)
,\sum_{J\in G_{L}}u_{L}(J,t)\Omega \left( p^{L}|x-J|_{p}\right) \right) \\
=\sum_{I\in G_{L}}f\left( u_{L}(I,t),v_{L}(I,t)\right) \Omega \left(
p^{L}|x-I|_{p}\right) \\
=\sum_{I\in G_{L}}\left\{ \mu u_{L}(I,t)-u_{L}^{3}(I,t)-u_{L}(I,t)\right\}
\Omega \left( p^{L}|x-I|_{p}\right) .
\end{gather*}%
A similar formula holds for function $g$. Then, using (\ref{Definition_f_g}%
), the discretization of the $p$-adic Turing system (\ref{1M}) has the form:%
\begin{equation}
\left\{ 
\begin{array}{cc}
\frac{\partial }{\partial t}\left[ u_{L}(I,t)\right] _{I\in G_{L}}= & \left[
\mu u_{L}(I,t)-u_{L}^{3}(I,t)-v_{L}(I,t)\right] _{I\in G_{L}}-A_{L}^{\alpha }%
\left[ u_{L}(I,t)\right] _{I\in G_{L}} \\ 
&  \\ 
\frac{\partial }{\partial t}\left[ v_{L}(I,t)\right] _{I\in G_{L}}= & \left[
\gamma \left( u_{L}(I,t)-\delta v_{L}(I,t)-\beta \right) \right] _{I\in
G_{L}}-dA_{L}^{\alpha }\left[ v_{L}(I,t)\right] _{I\in G_{L}},%
\end{array}%
\right.  \label{Eq-TP_discrete}
\end{equation}%
where $A_{L}^{\alpha }=\left[ A_{K,I}^{\alpha }\right] _{K,I\in G_{L}}$.

We now rewrite system (\ref{Eq-TP_discrete}) in a matrix form. We denote by $%
diag\left( a_{I};I\in G_{L}\right) $, a diagonal matrix of size $%
\#G_{L}\times \#G_{L}$. Now, by using (\ref{Discreto}) and (\ref%
{Eq-TP_discrete}), we have%
\begin{gather}
\frac{\partial }{\partial t}\left[ 
\begin{array}{c}
\left[ u_{L}(I,t)\right] _{I\in G_{L}} \\ 
\\ 
\left[ v_{L}(I,t)\right] _{I\in G_{L}}%
\end{array}%
\right] =  \label{Eq-TP_discrete-2} \\
\notag \\
\left[ 
\begin{array}{ll}
diag\left( f\left( u_{L}(I,t),v_{L}(I,t)\right) ;I\in G_{L}\right) & {\LARGE %
0}_{\#G_{L}\times \#G_{L}} \\ 
&  \\ 
{\LARGE 0}_{\#G_{L}\times \#G_{L}} & diag\left( g\left(
u_{L}(I,t),v_{L}(I,t)\right) ;I\in G_{L}\right)%
\end{array}%
\right]  \notag \\
\notag \\
-\left[ 
\begin{array}{cc}
{\LARGE I}_{\#G_{L}\times \#G_{L}} & {\LARGE 0}_{\#G_{L}\times \#G_{L}} \\ 
&  \\ 
{\LARGE 0}_{\#G_{L}\times \#G_{L}} & d{\LARGE I}_{\#G_{L}\times \#G_{L}}%
\end{array}%
\right] \left[ 
\begin{array}{ll}
A_{L}^{\alpha } & {\LARGE 0}_{\#G_{L}\times \#G_{L}} \\ 
&  \\ 
{\LARGE 0}_{\#G_{L}\times \#G_{L}} & A_{L}^{\alpha }%
\end{array}%
\right] \left[ 
\begin{array}{c}
\left[ u_{L}(I,t)\right] _{I\in G_{L}} \\ 
\\ 
\left[ v_{L}(I,t)\right] _{I\in G_{L}}%
\end{array}%
\right] ,  \notag
\end{gather}%
where ${\LARGE 0}_{\#G_{L}\times \#G_{L}}$ denotes a matrix of size $%
\#G_{L}\times \#G_{L}$ with all its entries equal to zero, and ${\LARGE I}%
_{\#G_{L}\times \#G_{L}}$ denotes the identity matrix of size $\#G_{L}\times
\#G_{L}$.

\subsection{Discrete homogeneous steady states}

We study the equilibrium points of the system%
\begin{gather}
\frac{\partial }{\partial t}\left[ 
\begin{array}{c}
\left[ u_{L}(I,t)\right] _{I\in G_{L}} \\ 
\\ 
\left[ v_{L}(I,t)\right] _{I\in G_{L}}%
\end{array}%
\right] =  \label{System_1} \\
\left[ 
\begin{array}{ll}
diag\left( f\left( u_{L}(I,t),v_{L}(I,t)\right) ;I\in G_{L}\right) & {\LARGE %
0}_{\#G_{L}\times \#G_{L}} \\ 
&  \\ 
{\LARGE 0}_{\#G_{L}\times \#G_{L}} & diag\left( g\left(
u_{L}(I,t),v_{L}(I,t)\right) ;I\in G_{L}\right)%
\end{array}%
\right] .  \notag
\end{gather}
The equilibrium points are the solutions of the following system of
algebraic \ equations: 
\begin{equation}
\left\{ 
\begin{array}{l}
f\left( u_{L}(I),v_{L}(I)\right) =0 \\ 
\\ 
g\left( u_{L}(I),v_{L}(I)\right) =0,%
\end{array}%
\right.  \label{s_h_d}
\end{equation}%
where $I\in G_{L}$. Notice that if $f\left( u_{0},v_{0}\right) =g\left(
u_{0},v_{0}\right) =0$, then $u_{L}(I)=u_{0}$, $v_{L}(I)=v_{0}$ is a
solution of (\ref{s_h_d}) for any $I\in G_{L}$.

Take $\eta =\frac{1-\delta \mu }{\delta }$ and $\tau =-\frac{\beta }{\delta }
$, as before. Then, 
\begin{equation}
\left[ 
\begin{array}{c}
\left[ u_{0}\right] _{I\in G_{L}} \\ 
\\ 
\left[ v_{0}\right] _{I\in G_{L}}%
\end{array}%
\right]  \label{Equilibrium-point}
\end{equation}%
is one equilibrium point.

\subsection{The Jacobian matrix}

We now consider the following polynomial mapping:%
\begin{equation}
\begin{array}{ccc}
\mathbb{R}^{2\#G_{L}} & \rightarrow & \mathbb{R}^{2\#G_{L}} \\ 
&  &  \\ 
\left[ 
\begin{array}{c}
\left[ u_{L}(I)\right] _{I\in G_{L}} \\ 
\\ 
\left[ v_{L}(I)\right] _{I\in G_{L}}%
\end{array}%
\right] & \rightarrow & \left[ 
\begin{array}{c}
\left[ f\left( u_{L}(I),v_{L}(I)\right) \right] _{I\in G_{L}} \\ 
\\ 
\left[ g\left( u_{L}(I),v_{L}(I)\right) \right] _{I\in G_{L}}%
\end{array}%
\right] .%
\end{array}
\label{Mapping}
\end{equation}%
We denote by $\nabla f\left( u_{0},v_{0}\right) $, the $1\times 2$ matrix $%
\left[ 
\begin{array}{cc}
\frac{\partial f\left( u_{0},v_{0}\right) }{\partial u} & \frac{\partial
f\left( u_{0},v_{0}\right) }{\partial v}%
\end{array}%
\right] $, and by 
\begin{equation*}
diag\left( \nabla f\left( u_{0},v_{0}\right) ;I\in G_{L}\right) ,
\end{equation*}%
the block diagonal matrix 
\begin{equation*}
\left[ 
\begin{array}{ccc}
\nabla f\left( u_{0},v_{0}\right) &  & 0 \\ 
& \ddots &  \\ 
0 &  & \nabla f\left( u_{0},v_{0}\right)%
\end{array}%
\right]
\end{equation*}%
of \ size $\#G_{L}\times 2\#G_{L}$. In a similar form, we define the block
diagonal matrix 
\begin{equation*}
diag\left( \nabla g\left( u_{0},v_{0}\right) ;I\in G_{L}\right) .
\end{equation*}%
The Jacobian matrix $\mathcal{A}$ of mapping (\ref{Mapping}) at the
equilibrium point (\ref{Equilibrium-point}) is the $2\#G_{L}\times 2\#G_{L}$
matrix%
\begin{equation*}
\mathcal{A}=\left[ 
\begin{array}{ccc}
\nabla f\left( u_{0},v_{0}\right) &  & 0 \\ 
& \ddots &  \\ 
0 &  & \nabla f\left( u_{0},v_{0}\right) \\ 
\nabla g\left( u_{0},v_{0}\right) &  & 0 \\ 
& \ddots &  \\ 
0 &  & \nabla g\left( u_{0},v_{0}\right)%
\end{array}%
\right] =\left[ 
\begin{array}{c}
diag\left( \nabla f\left( u_{0},v_{0}\right) ;I\in G_{L}\right) \\ 
\\ 
diag\left( \nabla g\left( u_{0},v_{0}\right) ;I\in G_{L}\right)%
\end{array}%
\right] .
\end{equation*}%
We now set%
\begin{equation*}
A=\left[ 
\begin{array}{cc}
\frac{\partial f\left( u_{0},v_{0}\right) }{\partial u} & \frac{\partial
f\left( u_{0},v_{0}\right) }{\partial v} \\ 
\frac{\partial g\left( u_{0},v_{0}\right) }{\partial u} & \frac{\partial
g\left( u_{0},v_{0}\right) }{\partial v}%
\end{array}%
\right] =\left[ 
\begin{array}{c}
\nabla f\left( u_{0},v_{0}\right) \\ 
\nabla g\left( u_{0},v_{0}\right)%
\end{array}%
\right]
\end{equation*}%
as before, and by a finite sequence of swappings of rows, matrix $\mathcal{A}
$ can be written as%
\begin{equation}
\mathcal{A}^{\prime }=\left[ 
\begin{array}{ccc}
A &  & 0 \\ 
& \ddots &  \\ 
0 &  & A%
\end{array}%
\right] ,  \label{Matrix_A_Prima}
\end{equation}%
which is a $\#G_{L}\times \#G_{L}$ block matrix.

We denote by $\sigma \left( A_{L}^{\alpha }\right) $ the spectrum of $A$,
and use the $\kappa _{1}$, $\kappa _{2}$ defined in (\ref{k1})-(\ref{k2}).

\begin{theorem}
\label{TheoremB}Let us consider the reaction-diffusion system (\ref%
{Eq-TP_discrete-2}). The discrete steady state $\left[ 
\begin{array}{c}
\left[ u_{0}\right] _{I\in G_{L}} \\ 
\\ 
\left[ v_{0}\right] _{I\in G_{L}}%
\end{array}%
\right] $ is linearly unstable (Turing unstable), if the following
conditions hold:

\begin{enumerate}
\item $Tr(A)=\mu -3u_{0}^{2}-\gamma \delta <0$ ;

\item $det (A)=-\mu \gamma \delta +3\gamma \delta u_{0}^{2}+\gamma >0$ ;

\item $d\left( \mu -3u_{0}^{2}\right) -\gamma \delta >0$;

\item The derivatives $\mu -3u_{0}^{2}$ and $-\gamma \delta $ must have
opposite signs;

\item $\left( d\left( \mu -3u_{0}^{2}\right) -\gamma \delta \right)
^{2}-4d\left( -\mu \gamma \delta +3\gamma \delta u_{0}^{2}+\gamma \right) >0$
;

\item $\Gamma _{L}=\{\kappa _{L}\in \sigma \left( A_{L}^{\alpha }\right)
;\kappa _{1}<\kappa _{L}<\kappa _{2}\}\neq \emptyset $.
\end{enumerate}

Furthermore, the Turing pattern has the form (\ref{Turing_pattern}).
\end{theorem}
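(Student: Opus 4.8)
The plan is to carry out the classical diffusion-driven instability analysis, now for the finite-dimensional ODE system (\ref{Eq-TP_discrete-2}), taking advantage of the fact that $A_{L}^{\alpha}$ is a real symmetric matrix. First I would linearize (\ref{Eq-TP_discrete-2}) at the equilibrium point (\ref{Equilibrium-point}). Ordering a perturbation as a $2\#G_{L}$-vector $w(t)$ with the $\#G_{L}$ $u$-components first and the $\#G_{L}$ $v$-components second, the reaction block of (\ref{Eq-TP_discrete-2}), whose entries are $f_{u_{0}},f_{v_{0}},g_{u_{0}},g_{v_{0}}$ repeated along diagonals, is exactly $A\otimes I_{\#G_{L}}$ with $A$ the $2\times 2$ Jacobian (\ref{Matrix_A}); the diffusion block $\mathrm{diag}(A_{L}^{\alpha},dA_{L}^{\alpha})$ is $D\otimes A_{L}^{\alpha}$ with $D=\mathrm{diag}(1,d)$; and the row swapping relating $\mathcal{A}$ to $\mathcal{A}^{\prime}$ in (\ref{Matrix_A_Prima}) is just the permutation interchanging the two tensor factors. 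Hence the linearized system is $\frac{\partial w}{\partial t}=(A\otimes I_{\#G_{L}}-D\otimes A_{L}^{\alpha})w$.

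Next I would diagonalize the diffusion part. From (\ref{Discreto}) one has $A_{K,I}^{\alpha}=A_{I,K}^{\alpha}$ since $|K-I|_{p}=|I-K|_{p}$, so $A_{L}^{\alpha}$ is symmetric and there is an orthogonal matrix $Q$ with $Q^{t}A_{L}^{\alpha}Q=\mathrm{diag}(\kappa_{L}^{(1)},\dots,\kappa_{L}^{(\#G_{L})})$, all $\kappa_{L}^{(k)}$ real; moreover $\kappa_{L}^{(k)}\geq 0$, because $A_{L}^{\alpha}$ has nonpositive off-diagonal entries and zero row sums, hence is of graph-Laplacian type (the discrete counterpart of (\ref{Resultado 3})--(\ref{Resultado 4})). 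Conjugating the linearized system by $I_{2}\otimes Q^{t}$ fixes $A\otimes I_{\#G_{L}}$ and sends $D\otimes A_{L}^{\alpha}$ to $D\otimes\mathrm{diag}(\kappa_{L}^{(k)})$, so the linearization decouples into $\#G_{L}$ independent $2\times 2$ blocks
\[
\frac{\partial }{\partial t}\widetilde{w}_{k}=\bigl(A-\kappa_{L}^{(k)}D\bigr)\widetilde{w}_{k},
\qquad
A-\kappa D=\left[\begin{array}{cc}\mu-3u_{0}^{2}-\kappa & -1\\ \gamma & -\gamma\delta-d\kappa\end{array}\right],\qquad k=1,\dots,\#G_{L}.
\]
Thus the spectrum of the full linearization is the union over $k$ of the eigenvalues of $A-\kappa_{L}^{(k)}D$, and stability is decided block by block.

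Then I would run the sign analysis. For $\kappa=0$ the block is $A$, whose eigenvalues have negative real parts by conditions (1)--(2), i.e.\ the equilibrium is stable in the absence of diffusion ($A_{L}^{\alpha}$ replaced by $0$). For $\kappa\geq 0$ one has $\mathrm{Tr}(A-\kappa D)=\mathrm{Tr}(A)-(1+d)\kappa<0$, so $A-\kappa D$ has an eigenvalue with positive real part if and only if $\det(A-\kappa D)<0$; a direct computation gives $\det(A-\kappa D)=d\kappa^{2}-(d(\mu-3u_{0}^{2})-\gamma\delta)\kappa+\det(A)$, an upward parabola in $\kappa$ whose two roots are precisely $\kappa_{1},\kappa_{2}$ of (\ref{k1})--(\ref{k2}). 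Condition (5) makes these roots real and distinct, (2) and (3) force $0<\kappa_{1}<\kappa_{2}$ (positive product and positive sum of roots), and (4) is the activator--inhibitor sign condition; hence $\det(A-\kappa D)<0$ exactly for $\kappa\in(\kappa_{1},\kappa_{2})$. Condition (6), $\Gamma_{L}\neq\emptyset$, then yields an eigenvalue $\kappa_{L}\in\sigma(A_{L}^{\alpha})$ with $\kappa_{1}<\kappa_{L}<\kappa_{2}$, so the associated $2\times 2$ block has an eigenvalue $\rho(\kappa_{L})$ with $\mathrm{Re}\,\rho(\kappa_{L})>0$: a genuine unstable eigenmode. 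Since (\ref{Eq-TP_discrete-2}) is finite-dimensional with polynomial (hence $C^{1}$) right-hand side, the principle of linearized instability shows the discrete steady state (\ref{Equilibrium-point}) is unstable for the nonlinear system, and expanding the growing modes---$e^{\rho(\kappa_{L})t}$ times an eigenvector of $A-\kappa_{L}D$ tensored with the eigenvector of $A_{L}^{\alpha}$ for $\kappa_{L}$, summed over $\kappa_{L}\in\Gamma_{L}$ and over the corresponding eigenvectors---in the discrete wavelet-type basis underlying $\mathcal{D}_{L}$ gives the stated form (\ref{Turing_pattern}).

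I do not expect a deep obstacle; the work is essentially bookkeeping. The two points requiring care are identifying the reaction and diffusion parts of (\ref{Eq-TP_discrete-2}) as the Kronecker products $A\otimes I_{\#G_{L}}$ and $D\otimes A_{L}^{\alpha}$ (so that the permutation in (\ref{Matrix_A_Prima}) only reshuffles tensor factors), and exploiting the symmetry of $A_{L}^{\alpha}$ to obtain a clean orthogonal block-diagonalization with real, nonnegative spectrum; after that the computation of $\det(A-\kappa D)$ and its roots reduces the problem to the same $2\times 2$ discussion as in the proof of Theorem \ref{Theorem 1}. The only structural difference from Theorem \ref{Theorem 1} is that $\sigma(A_{L}^{\alpha})$ is finite, so here there are finitely many unstable eigenmodes and no infinite-dimensional invariant-manifold argument is needed; this is why the ``infinitely many unstable eigenmodes'' clause of Theorem \ref{Theorem 1} is absent.
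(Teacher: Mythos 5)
Your proposal is correct and follows essentially the same route as the paper's proof: linearize at the constant steady state, use the symmetry of $A_{L}^{\alpha }$ to decouple the linearization into $2\times 2$ blocks $A-\kappa D$ (your Kronecker-product bookkeeping is just a cleaner rendering of the paper's row-swapping argument for $\mathcal{A}$ and $\mathcal{A}^{\prime }$), and reduce conditions (1)--(6) to the sign analysis of the quadratic $h(\kappa )=\det (A-\kappa D)$ with roots $\kappa _{1},\kappa _{2}$. Your added observations---that $\sigma (A_{L}^{\alpha })$ is nonnegative by the graph-Laplacian structure, and the explicit appeal to the principle of linearized instability to pass from the linear to the nonlinear system---are correct refinements of the same argument.
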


\begin{proof}
We first linearize system (\ref{Eq-TP_discrete-2}) about the steady state (%
\ref{Equilibrium-point}). Set%
\begin{equation*}
\left[ 
\begin{array}{c}
\left[ w_{L}^{\left( 1\right) }(I,t)\right] _{I\in G_{L}} \\ 
\\ 
\left[ w_{L}^{\left( 2\right) }(I,t)\right] _{I\in G_{L}}%
\end{array}%
\right] :=\left[ 
\begin{array}{c}
\left[ u_{L}(I,t)-u_{0}\right] _{I\in G_{L}} \\ 
\\ 
\left[ v_{L}(I,t)-v_{0}\right] _{I\in G_{L}}%
\end{array}%
\right] .
\end{equation*}%
Then the linear approximation is%
\begin{equation*}
\left[ 
\begin{array}{c}
\left[ w_{L}^{\left( 1\right) }(I,t)\right] _{I\in G_{L}} \\ 
\\ 
\left[ w_{L}^{\left( 2\right) }(I,t)\right] _{I\in G_{L}}%
\end{array}%
\right] =\mathcal{A}\left[ 
\begin{array}{c}
\left[ w_{L}^{\left( 1\right) }(I,t)\right] _{I\in G_{L}} \\ 
\\ 
\left[ w_{L}^{\left( 2\right) }(I,t)\right] _{I\in G_{L}}%
\end{array}%
\right] .
\end{equation*}%
The equilibrium point%
\begin{equation}
\left[ 
\begin{array}{c}
\left[ 0\right] _{I\in G_{L}} \\ 
\\ 
\left[ 0\right] _{I\in G_{L}}%
\end{array}%
\right]  \label{Equilibrium-point-2}
\end{equation}%
is linearly stable, if the eigenvalues of $\mathcal{A}$ have negative real
parts. By a suitable sequence of swappings of the rows of $\mathcal{A}$, we
have 
\begin{equation*}
det \left( \mathcal{A}-\rho I\right) =\pm det \left( \mathcal{A}^{\prime
}-\rho I\right) =\pm det \left( A-\rho I\right) ^{\#G_{L}}.
\end{equation*}%
Then the eigenvalues of $\mathcal{A}$ are exactly the eigenvalues of $A$
counted with multiplicity $G_{L}$:%
\begin{equation}
det \left( A-\rho I\right) =det \left[ 
\begin{array}{cc}
\mu -3u_{0}^{2}-\rho & -1 \\ 
\gamma & -\lambda \delta -\rho%
\end{array}%
\right] =\rho ^{2}-\rho Tr(A)+det (A)=0.  \label{Condition_1}
\end{equation}%
Then%
\begin{equation*}
\rho _{1,2}=\frac{\pm \sqrt{\left( \mu -3u_{0}^{2}-\gamma \delta \right)
^{2}-4\left( -\mu \gamma \delta +3\gamma \delta u_{0}^{2}+\gamma \right) }}{2%
}+\frac{\mu -3u_{0}^{2}-\gamma \delta }{2}.
\end{equation*}%
The condition $\text{Re}(\rho _{1,2})<0$ is guaranteed, if the trace and the
determinant of matrix $A$ satisfy%
\begin{equation}
\text{Tr}(A)<0\text{, }det (A)>0.  \label{Condition_det_tr}
\end{equation}%
Now, we linearize the entire reaction-ultradiffusion system close to the
steady state (\ref{Equilibrium-point-2}):%
\begin{equation}
\frac{\partial }{\partial t}\left[ 
\begin{array}{c}
\left[ w_{L}^{\left( 1\right) }(I,t)\right] _{I\in G_{L}} \\ 
\\ 
\left[ w_{L}^{\left( 2\right) }(I,t)\right] _{I\in G_{L}}%
\end{array}%
\right] =\left( \mathcal{A}-D_{L}\mathcal{A}_{L}^{\alpha }\right) \left[ 
\begin{array}{c}
\left[ w_{L}^{\left( 1\right) }(I,t)\right] _{I\in G_{L}} \\ 
\\ 
\left[ w_{L}^{\left( 2\right) }(I,t)\right] _{I\in G_{L}}%
\end{array}%
\right] ,  \label{System_4A}
\end{equation}%
where%
\begin{equation*}
D_{L}:=\left[ 
\begin{array}{cc}
{\LARGE I}_{\#G_{L}\times \#G_{L}} & {\LARGE 0}_{\#G_{L}\times \#G_{L}} \\ 
&  \\ 
{\LARGE 0}_{\#G_{L}\times \#G_{L}} & d{\LARGE I}_{\#G_{L}\times \#G_{L}}%
\end{array}%
\right] \text{, \ }\mathcal{A}_{L}^{\alpha }:=\left[ 
\begin{array}{ll}
A_{L}^{\alpha } & {\LARGE 0}_{\#G_{L}\times \#G_{L}} \\ 
&  \\ 
{\LARGE 0}_{\#G_{L}\times \#G_{L}} & A_{L}^{\alpha }%
\end{array}%
\right] .
\end{equation*}%
The matrices $A_{L}^{\alpha }$, $\mathcal{A}_{L}^{\alpha }$ are real
symmetric, and consequently they are diagonalizable. Then, there exists a
basis $\left\{ \boldsymbol{e}_{\kappa }\right\} $ of $\mathbb{R}^{\#G_{L}}$
such that 
\begin{equation*}
A_{L}^{\alpha }\boldsymbol{e}_{\kappa }=\kappa \boldsymbol{e}_{\kappa },
\end{equation*}%
where $\kappa =\kappa \left( L\right) $. Then%
\begin{equation*}
\mathcal{A}_{L}^{\alpha }\left[ 
\begin{array}{c}
\boldsymbol{e}_{\kappa } \\ 
\boldsymbol{e}_{\kappa }%
\end{array}%
\right] =\kappa \left[ 
\begin{array}{c}
\boldsymbol{e}_{\kappa } \\ 
\boldsymbol{e}_{\kappa }%
\end{array}%
\right] .
\end{equation*}%
We now look for a solution of system (\ref{System_4A}) of the form 
\begin{equation*}
\left[ 
\begin{array}{c}
\left[ w_{L}^{\left( 1\right) }(I,t)\right] _{I\in G_{L}} \\ 
\\ 
\left[ w_{L}^{\left( 2\right) }(I,t)\right] _{I\in G_{L}}%
\end{array}%
\right] ,
\end{equation*}%
where $w_{L}^{(j)}(I,t)=\sum_{\kappa ,\rho }C_{\kappa ,\rho }e^{\rho t}%
\boldsymbol{e}_{\kappa }$, where $\rho =\rho \left( j,I,L\right) $, $\kappa
=\kappa \left( j,I,L\right) $. The function $e^{\rho t}\boldsymbol{e}%
_{\kappa }$ is a non-trivial solution of (\ref{System_4A}), if $\rho $
satisfies 
\begin{equation}
\text{det }(\rho I-\mathcal{A}+\kappa D_{L})=0.  \label{6_d}
\end{equation}%
By a finite sequence of swappings of rows, we have%
\begin{gather}
\text{det }(\rho I-\mathcal{A}+\kappa D_{L})=\pm det \left[ 
\begin{array}{ccc}
\rho I_{2\times 2}-A+\kappa D &  & 0 \\ 
& \ddots &  \\ 
0 &  & \rho I_{2\times 2}-A+\kappa D%
\end{array}%
\right]  \notag \\
=\pm det \left( \rho I_{2\times 2}-A+\kappa D\right) ^{\#G_{L}}  \notag \\
=\rho ^{2}+[\kappa (1+d)-\text{Tr}(A)]\rho +h(\kappa )=0,
\label{Condition_2}
\end{gather}%
where 
\begin{equation*}
D=\left[ 
\begin{array}{ll}
1 & 0 \\ 
0 & d%
\end{array}%
\right] ,
\end{equation*}%
and 
\begin{equation}
h(\kappa ):=d\kappa ^{2}-\kappa \left( d\left( \mu -3u_{0}^{2}\right)
-\gamma \delta \right) +\text{det }(A).  \label{Eq_h_k}
\end{equation}%
Since $\kappa =0$ is not an eigenvalue of the matrix $A_{L}^{\alpha }$, the
conditions (\ref{Condition_1}) and (\ref{Condition_2}) are independent. \
For that the steady state to be unstable for spatial perturbations, we need
that $\text{Re}(\rho (\kappa ))>0$, for some $\kappa \neq 0$, this
can happen either if the coefficient of $\rho $ in (\ref{Condition_2}) is
negative or if $h(\kappa )<0$, for some $\kappa \neq 0$ in (\ref{Eq_h_k}).
For being $\text{Tr}(A)<0$ of the conditions (\ref{Condition_det_tr}) and
the coefficient of $\rho $ in (\ref{Condition_2}) is $\kappa (1+d)-\text{%
Tr}(A)$, which is positive, so the only way that $\text{\text{Re}}(\rho
(\kappa ))$ can be positive is if $h(\kappa )<0$ for some $\kappa \neq 0$.
As $\text{det }(A)>0$ of (\ref{Condition_det_tr}), in order for $%
h(\kappa )$ to be negative, it is necessary that $d\left( \mu
-3u_{0}^{2}\right) -\gamma \delta >0$. Now, since $\text{Tr}(A)=\mu
-3u_{0}^{2}-\gamma \delta <0$, necessarily $d\neq 1$ and $\mu -3u_{0}^{2}$
and $-\gamma \delta $ must have opposite signs. Thus, we have that an
additional requirement to (\ref{Condition_det_tr}) is that $d\neq 1$. This
is a necessary, but not sufficient, condition for that $\text{\text{Re}}%
(\rho (\kappa ))>0$. For that $h(\kappa )$ to be negative for some non zero $%
\kappa $, the minimum $h_{\text{min}}$ of $h(\kappa )$ must be negative.
Using elementary calculations, we show that%
\begin{equation*}
h_{\text{min}}=\text{det }(A)-\frac{\left( d\left( \mu
-3u_{0}^{2}\right) -\gamma \delta \right) ^{2}}{4d},
\end{equation*}%
and the minimum is reached at 
\begin{equation}
k_{\text{min}}=\frac{d\left( \mu -3u_{0}^{2}\right) -\gamma \delta }{2d}.
\label{k_min}
\end{equation}%
Therefore, the condition $h(\kappa )<0$ for some $\kappa \neq 0$ is 
\begin{equation*}
\frac{\left( d(\mu -3u_{0}^{2})-\gamma \delta \right) ^{2}}{4d}>\text{%
det }(A).
\end{equation*}%
A bifurcation occurs when $h_{\text{min}}=0$, this happens when the
condition 
\begin{equation*}
\text{det }(A)=\frac{\left( d\left( \mu -3u_{0}^{2}\right) -\gamma
\delta \right) ^{2}}{4d},
\end{equation*}%
is verified. This condition defines a critical diffusion $d_{c}$, which is
given as an appropriate root of%
\begin{equation*}
\left( \mu -3u_{0}^{2}\right) ^{2}d_{c}^{2}+2\left( -2\gamma +\mu \gamma
\delta -3\gamma \delta u_{0}^{2}\right) d_{c}+\gamma ^{2}\delta ^{2}=0.
\end{equation*}%
The model (\ref{Eq-TP_discrete-2}) for $d>d_{c}$ exhibits Turing
instability, while for $d<d_{c}$ it does not. Note that $d_{c}>1$. A
critical `wavenumber' is obtained using (\ref{k_min}) 
\begin{equation}
\kappa _{c}=\frac{d_{c}\left( \mu -3u_{0}^{2}\right) -\gamma \delta }{2d_{c}}%
=\sqrt{\frac{\text{det }(A)}{d_{c}}}.  \label{k_critical}
\end{equation}%
When $d>d_{c}$, there is a range of number of unstable positive waves $%
\kappa _{1}<\kappa <\kappa _{2}$, where $\kappa _{1}$, $\kappa _{2}$ are the
zeros of $h(\kappa )=0$, see (\ref{k1})-(\ref{k2}). We call to function $%
\rho (\kappa )$ the dispersion relation. We note that, within the unstable
range, $\text{\text{Re}}(\rho (\kappa ))>0$ has a maximum for the
wavenumber $\kappa _{min}^{(0)}$ obtained from (\ref{k_min}) with $d>d_{c}$.
Then as $t$ it increases, the behavior of $\left[ 
\begin{array}{c}
\left[ w_{L}^{\left( 1\right) }(I,t)\right] _{I\in G_{L}} \\ 
\\ 
\left[ w_{L}^{\left( 2\right) }(I,t)\right] _{I\in G_{L}}%
\end{array}%
\right] $ is controlled by the dominant mode, that is, those $e^{\rho
(\kappa )t}\left[ 
\begin{array}{c}
\boldsymbol{e}_{\kappa } \\ 
\boldsymbol{e}_{\kappa }%
\end{array}%
\right] $ with $\text{\text{Re}}(\rho (\kappa ))>0$, since the other
modes go to zero exponentially. We recall that $\kappa =\kappa \left(
L\right) $. For this reason, we use the notation $\kappa =\kappa _{L}$. Wit
this notation, 
\begin{equation}
w_{L}^{\left( j\right) }(I,t)\thicksim \sum_{\kappa _{1}<\kappa _{L}<\kappa
_{2}}C_{\kappa }\left( j,I\right) e^{\rho (\kappa _{L})t}\boldsymbol{e}%
_{\kappa }\text{, for }t\rightarrow \infty ,  \label{Turing_pattern}
\end{equation}%
where $j=1$, $2$.
\end{proof}

Digernes and his collaborators have studies extensively the problem of
approximation of spectra of Vladimirov\ operator $\boldsymbol{D}^{\alpha }$
by matrices of type $A_{L}^{\alpha }$, \cite{Digernes}-\cite{Digernes et al}%
. By using the fact that the eigenvalues $\varsigma \neq \lambda $ and
eigenfunctions $\Psi _{rnj}$ of ${\boldsymbol{D}_{0}^{\alpha }}$ are also
eigenvalues and eigenfunctions of $\boldsymbol{D}^{\alpha }$, and Theorem
4.1 in \cite{Bakken-Digernes}, one concludes that for $L$ sufficiently
large, the eigenvalues of matrix $A_{L}^{\alpha }$ approximate the
eigenvalues $\varsigma \neq \lambda $ of ${\boldsymbol{D}_{0}^{\alpha
}-\lambda }$, in a symbolic form $\Gamma _{L}\approx \Gamma \smallsetminus
\left\{ \lambda \right\} $.

\section{\label{Section_6} Numerical approximations of Turing patterns}

This section presents numerical approximations of Turing patterns associated
with specific $p$-adic FitzHugh-Nagumo systems. By suitable choosing of the
parameters ($\mu ,\gamma ,\delta ,\beta ,d$, with $d>1$), we find a region
where the conditions (1)-(5) of Theorem \ref{TheoremB} are satisfied. Then
we solve numerically the system of ODEs (\ref{Eq-TP_discrete-2}). Finally,
we give various visualizations of the solutions intending to show several
aspects of the Turing patterns. To construct a region (called the Turing
unstable region), we use an $\left( f_{u_{1}},g_{v_{1}}\right) $ plane,
i.e., we set%
\begin{equation*}
x=f_{u_{1}}=\mu -3u_{1}^{2}\text{, \ \ }y=g_{v_{1}}=-\gamma \delta .
\end{equation*}%
Figure \ref{Figure 2} shows a Turing unstable region associated with a
steady state\ of system (\ref{Eq-TP_discrete-2}). The parameters ($\mu
,\gamma ,\delta ,\beta ,d$, with $d>1$) that give rise to green points in
Figure \ref{Figure 2} correspond to some Turing pattern.

\begin{figure}[H]
	\centering
	\includegraphics[scale=0.8]{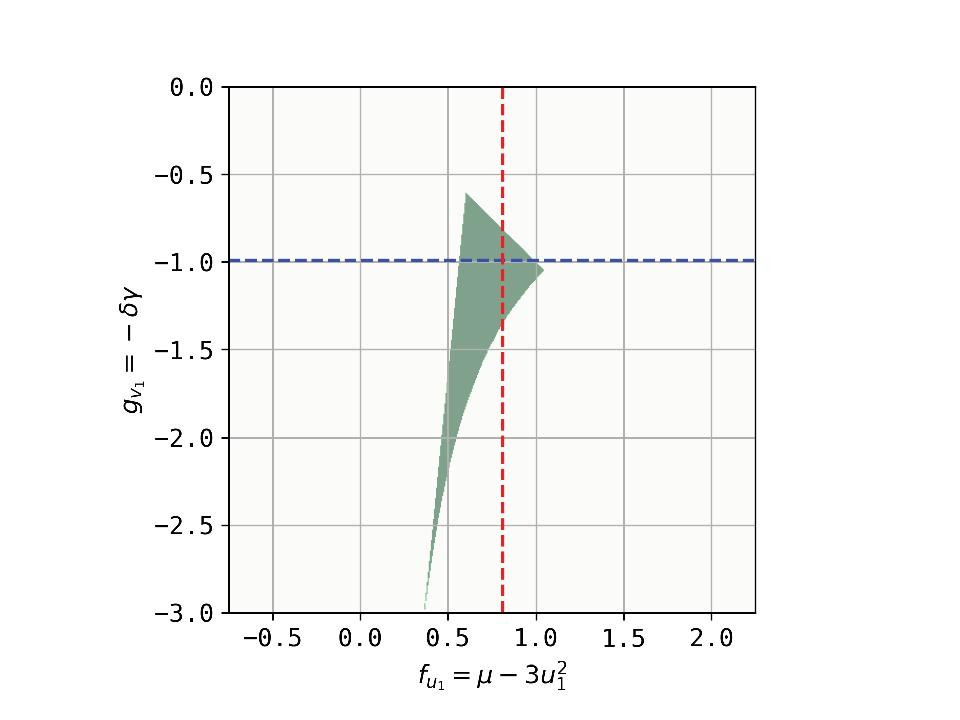}
	\caption{Turing unstable region. All the
		points in the green region of the $(f_{u_{1}}g_{v_{1}})$ -plane, which
		satisfies the conditions $(1)-(5)$ of Theorem \protect\ref{TheoremB}. The
		parameters are $p=2,\protect\mu =1.26,\protect\beta =0,\protect\delta =0.9, \protect\gamma =1.1,d=10,L=9$, and $(u_{1},v_{1})=(0.3858,0.4287)$.}
		\label{Figure 2}
\end{figure}

The last condition in Theorem \ \ref{TheoremB} is shown in the left part of
Figure \ref{Figure 3}. More precisely, the eigenvalues of matrix $%
A_{L}^{\alpha }$ between the dotted lines (which represent the values $%
\kappa _{1}$, $\kappa _{2}$) satisfy condition (6) in Theorem \ \ref%
{TheoremB}. The right part of Figure \ref{Figure 3} shows the eigenvalues of
operator $\boldsymbol{D}_{0}^{\alpha }-\lambda $, see Section \ref%
{Section_Spectrum}. For $L$ sufficiently large, the eigenvalues of $%
A_{L}^{\alpha }$ approach to the ones of $\boldsymbol{D}_{0}^{\alpha
}-\lambda $, such it was discussed at the end of Section \ref{Section_5}.


\begin{figure}[H]
	\centering
	\includegraphics[scale=0.6]{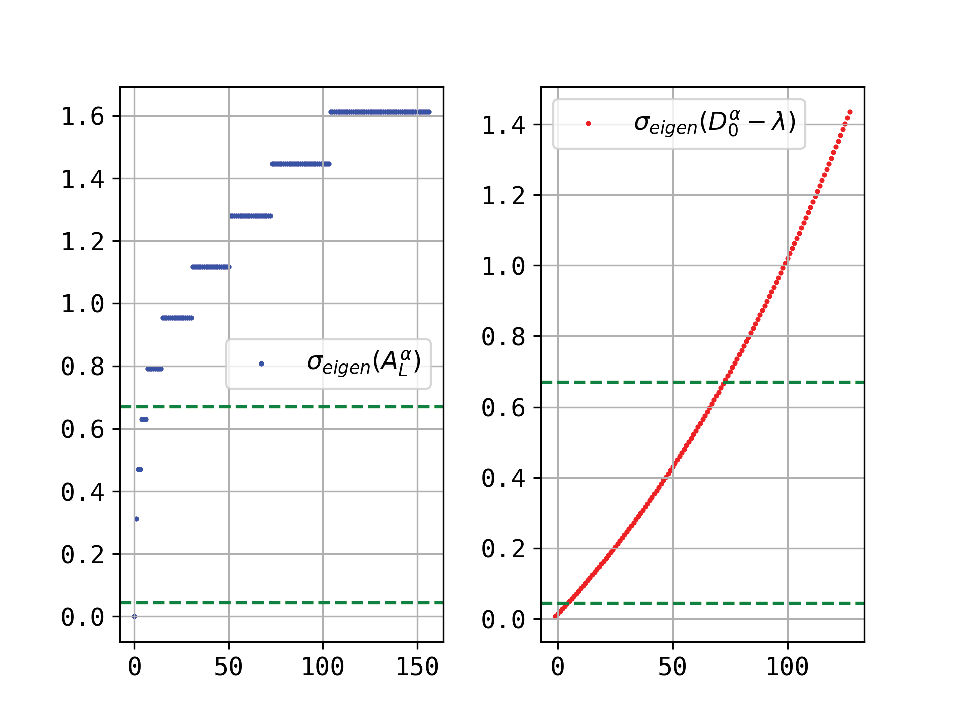}
	\caption{The left part of the figure
		shows the first $2^{9}$ eigenvalues of the matrix $A_{L}^{\protect\alpha }$,
		which is a discretization of the operator $D_{0}^{\protect\alpha }-\protect%
		\lambda $ The right part of the figure shows the first $2^{9}$ eigenvalues
		of $D_{0}^{\protect\alpha }-\protect\lambda $. Notice that eigenvalue $%
		\protect\lambda $ is very close to $1$. }
	\label{Figure 3}
\end{figure}

Figures \ref{Figure 4} and \ref{Figure 5} show the Turing patterns, which
are solutions of the Cauchy problem associated with system (\ref{Eq-TP_discrete-2}), with an initial datum close to $(u_{1},v_{1})$, for $t$
sufficiently large.

\begin{figure}[H]
	\centering
	\includegraphics[scale=0.6]{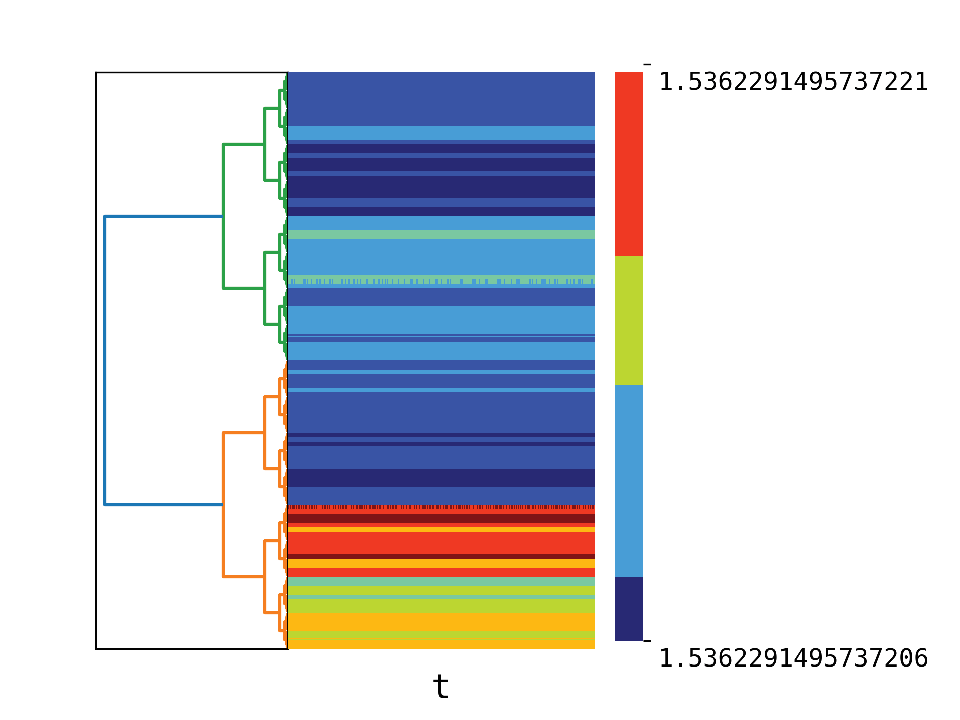}
	\caption{The activator states $ u_{L}(I,\cdot )$ for $731<t<2000$, and $L=9$. The vertical scale runs
		through the points of tree $G_{9}$.}
\label{Figure 4}
\end{figure}

\begin{figure}[H]
	\centering
	\includegraphics[scale=0.6]{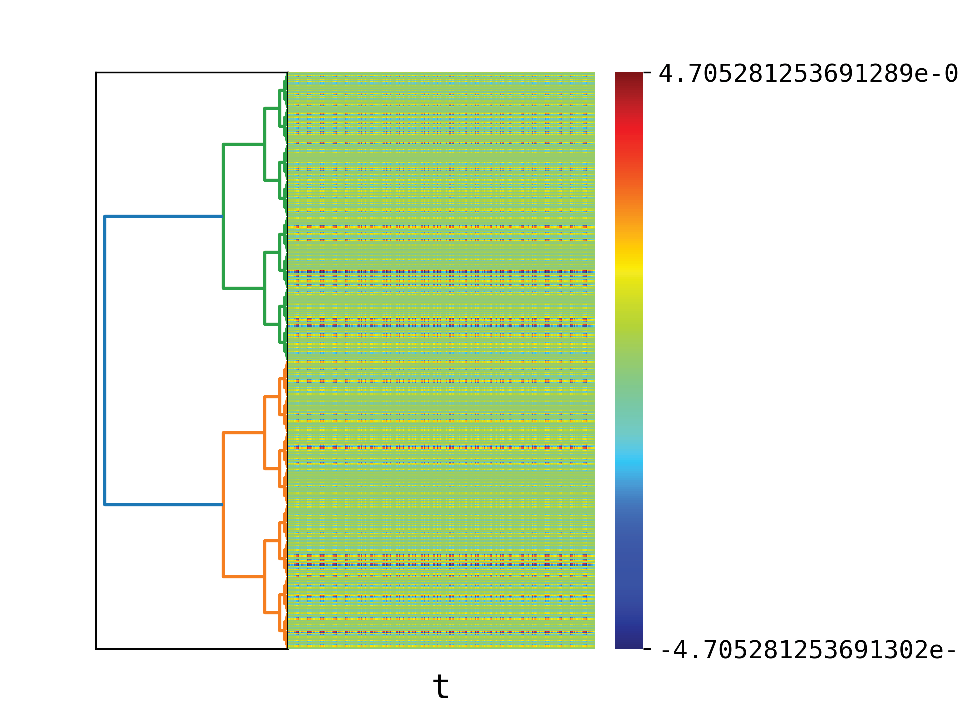}
	\caption{The inhibitor states $ 	v_{L}(I,\cdot )$ for $731<t<2000$, and $L=9$. The vertical scale runs
		through the points of tree $G_{9}$.}
\label{Figure 5}
\end{figure}

\begin{figure}[H]
	\centering
	\includegraphics[scale=0.6]{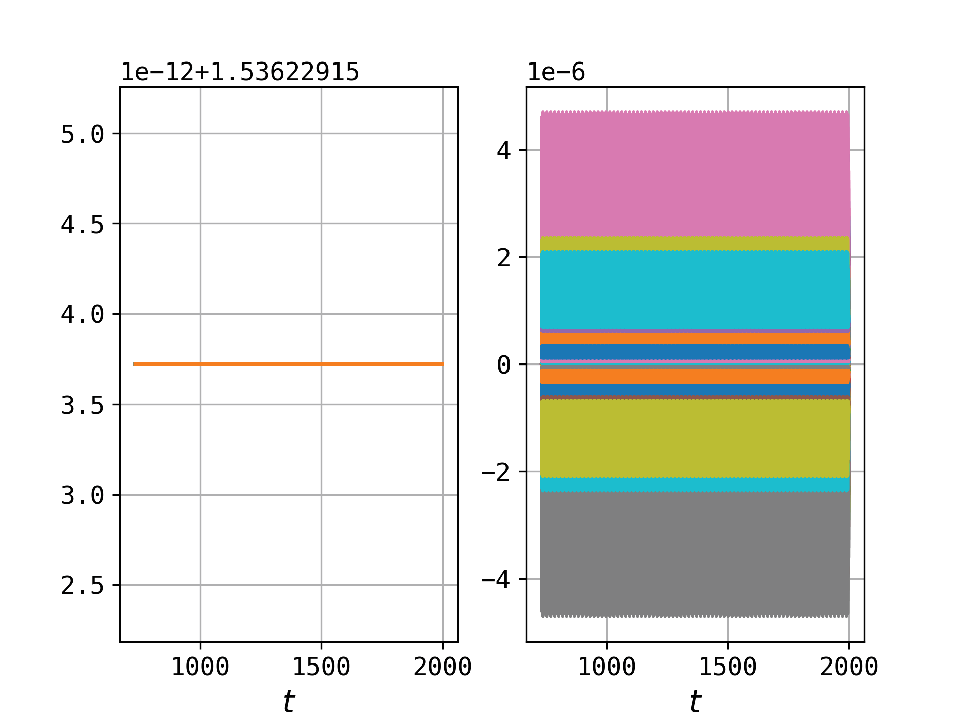}
	\caption{The left side of the figure
		shows the activator $u_{L}\left( I,\cdot \right) $, while the right side
		shows the inhibitor $v_{L}\left( I,\cdot \right) $. This figure shows the
		evolution of all the system states (\protect\ref{Eq-TP_discrete-2}) for time 
		$731<t<2000$. At time $t=0$, the initial datum for the Cauchy problem is $(%
		\protect\overset{\sim }{u_{1}},\protect\overset{\sim }{v_{1}})$, where $%
		\protect\overset{\sim }{u_{1}}$ is a sample of a Gaussian variable with mean 
		$u_{1}$ and variance $0.01$, and $\protect\overset{\sim }{v_{1}}$ is a
		sample of a Gaussian variable with mean $v_{1}$ and variance $0.01$. For any
		initial state $(\protect\overset{\sim }{u_{1}},\protect\overset{\sim }{v_{1}}%
		)$, the system (\protect\ref{Eq-TP_discrete-2}) develops the Turing pattern
		shown in this figure.}
	\label{Figure 6}
\end{figure}

\begin{figure}[H]
	\centering
	\includegraphics[scale=0.7]{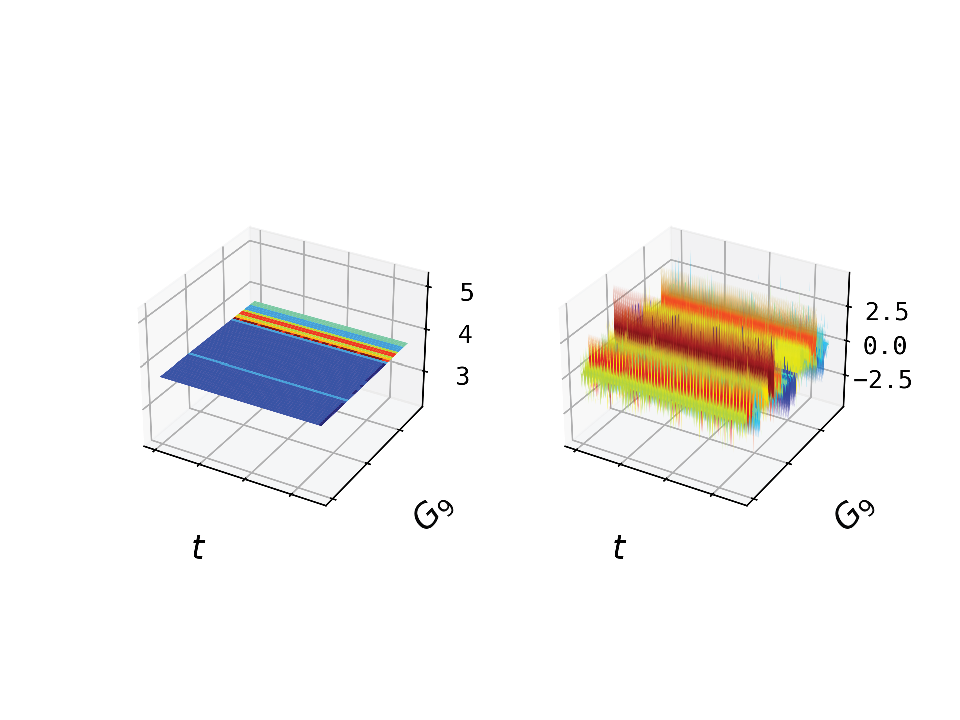}
	\caption{This Figure is a $3D$ version
		of Figure \protect\ref{Figure 6}. The left side of the figure shows the
		activator $u_{L}\left( I,\cdot \right) $, while the right side shows the
		inhibitor $v_{L}\left( I,\cdot \right) $. It shows the evolution of all the
		states of the system (\protect\ref{Eq-TP_discrete-2}) for time $731<t<2000$.
		The initial datum for the Cauchy problem is the same as in Figure 	\ref{Figure 6}. The Turing patterns are traveling waves.}
		\label{Figure 7}
\end{figure}

\bigskip


\begin{thebibliography}{99}
\bibitem{Alberio et al} S. Albeverio, A. Yu. Khrennikov, V. M. Shelkovich, 
\textit{Theory of }$p$\textit{-adic distributions: linear and nonlinear
models}. London Mathematical Society Lecture Note Series, 370 (Cambridge
University Press, Cambridge, 2010).

\bibitem{Ambrosio et al} B. Ambrosio, M. A. Aziz-Alaoui, V. L. E. Phan,
\textquotedblleft Global attractor of complex networks of reaction-diffusion
systems of Fitzhugh-Nagumo type,\textquotedblright\ Discrete Contin. Dyn.
Syst., Ser. B \textbf{23} (9), 3787-3797 (2018)

\bibitem{Boccaletti et al} S. Boccaletti, V. Latora, Y. Moreno, M. Chavez,
D.-U. Hwang, \textquotedblleft Complex networks: structure and
dynamics,\textquotedblright\ Phys. Rep\textit{.} \textbf{424} (4-5),
175--308 (2006)

\bibitem{Carletti} Timoteo Carletti and Hiroya Nakao, \textquotedblleft
Turing patterns in a network-reduced FitzHugh-Nagumo
model,\textquotedblright\ Phys. Rev. E, \textbf{101}, 022203 (2020).

\bibitem{Chung} Soon-Yeong Chung, Jae-Hwang Lee, \textquotedblleft Blow-up
for discrete reaction-diffusion equations on networks,\textquotedblright\
Appl. Anal. Discrete Math. \textbf{9 }(1), 103-119 (2015).

\bibitem{Digernes} Trond Digernes, \textquotedblleft A review of finite
approximations, Archimedean and non-Archimedean,\textquotedblright\ p-Adic
Numbers Ultrametric Anal. Appl. \textbf{10} (4), 253--266 (2018).

\bibitem{Bakken-Digernes} E. M. Bakken, T. Digernes, \textquotedblleft
Finite approximations of physical models over local
fields,\textquotedblright\ p-Adic Numbers Ultrametric Anal. Appl. \textbf{7}
(4), 245--258 (2015).

\bibitem{Digernes et al} E. M. Bakken, T. Digernes, M. U. Lund, D. Weisbart,
\textquotedblleft Finite approximations of physical models over p-adic
fields,\textquotedblright\ p-Adic Numbers Ultrametric Anal. Appl. \textbf{5}
(4), 249--259 (2013).

\bibitem{Dra-Kh-K-V} B. Dragovich, A. Yu. Khrennikov, S. V. Kozyrev, I. V.
Volovich, On $p$-adic mathematical physics, $p$\textit{-Adic Numbers
Ultrametric Anal. Appl.} \textbf{1} (2009), no. 1, 1--17.

\bibitem{Halmos} Paul R. Halmos, \textit{Measure Theor}y (D. Van Nostrand
Company, 1950).

\bibitem{Ide} Yusuke Ide, Hirofumi Izuhara, Takuya Machida,
\textquotedblleft Turing instability in reaction-diffusion models on complex
networks,\textquotedblright\ Physica A \textbf{457}, 331-347 (2016).

\bibitem{KKZuniga} Andrei Khrennikov, Sergei Kozyrev, W. A. Z{\'{u}}{\~{n}}%
iga-Galindo, \textit{Ultrametric Equations and its Applications},
Encyclopedia of Mathematics and its Applications \textbf{168} (Cambridge
University Press, 2018).

\bibitem{Kochubei} Anatoly N. Kochubei,\textit{\ Pseudo-differential
equations and stochastics over non-Archimedean fields (}Marcel Dekker, Inc.,
New York, 2001).

\bibitem{Mocarlo} M. Mocarlo Zheng, Bin Shao, Qi Ouyang, \textquotedblleft
Identifying network topologies that can generate Turing
pattern,\textquotedblright\ J. Theor. Biol. \textbf{408}, 88-96 (2016).

\bibitem{Mugnolo} Delio Mugnolo, \textit{Semigroup methods for evolution
equations on networks. Understanding Complex Systems} (Springer, Cham, 2014).

\bibitem{Murray} J. D. Murray, \textit{Mathematical biology. I. An
introduction} (Third edition. Springer-Verlag, New York, 2003).

\bibitem{Nakao-Mikhailov} Hiroya Nakao and Alexander S. Mikhailov,
\textquotedblleft Turing patterns in network-organized activator-inhibitor
systems,\textquotedblright\ Nature Physics \textbf{6}, 544-550 \textbf{\ }%
(2010).

\bibitem{Othmer et al 1} H. G. Othmer, L. E. Scriven, \textquotedblleft
Instability and dynamic pattern in cellular networks,\textquotedblright\ J.
Theor. Biol. \textbf{32}, 507-537 (1971).

\bibitem{Othmer et al 2} H. G. \ Othmer, L. E. Scriven, \textquotedblleft
Nonlinear aspects of dynamic pattern in cellular
networks,\textquotedblright\ J. Theor. Biol. \textbf{43}, 83-112 \textbf{\ }%
(1974).

\bibitem{Perthame} Beno\^{\i}t Perthame, \textit{Parabolic equations in
biology. Growth, reaction, movement and diffusion}. Lecture Notes on
Mathematical Modelling in the Life Sciences (Springer, Cham, 2015).

\bibitem{Shoji} Hiroto Shoji, Kohtaro Yamada, Daishin Ueyama, Takao Ohta,
\textquotedblleft Turing patterns in three dimensions,\textquotedblright\
Phys. Rev. E \textbf{75} (4), 046212, 13 pp. (2007)

\bibitem{Slavova et al} Angela Slavova, Pietro Zecca, \textquotedblleft
Complex behavior of polynomial FitzHugh-Nagumo cellular neural network
model,\textquotedblright\ Nonlinear Anal., Real World Appl. \textbf{8 }(4),
1331-1340 (2007).

\bibitem{Taibleson} M. H. Taibleson, \textit{Fourier analysis on local fields%
} (Princeton University Press, 1975).

\bibitem{Turing} A. M. Turing, \textquotedblleft The chemical basis of
morphogenesis,\textquotedblright\ Phil. Trans. R. Soc. Lond.\textit{\ B }%
\textbf{237}, 37-72 (1952).

\bibitem{V-V-Z} V. S. Vladimirov, I. V. Volovich, E. I. Zelenov,\textit{\ }$%
p $\textit{-adic analysis and mathematical physics} (World Scientific, 1994).

\bibitem{von Below} Joachim von Below, Jos\'{e} A. Lubary, \textquotedblleft
Instability of stationary solutions of reaction-diffusion-equations on
graphs,\textquotedblright\ Result. Math. \textbf{68 }(1-2), 171-201(2015).

\bibitem{Zhao} Hongyong Zhao, Xuanxuan Huang, Xuebing Zhang,
\textquotedblleft Turing instability and pattern formation of neural
networks with reaction-diffusion terms,\textquotedblright\ Nonlinear Dynam. 
\textbf{76} (1), 115--124 (2014).

\bibitem{Zuniga-LNM-2016} W. A. Z\'{u}\~{n}iga-Galindo, \textit{%
Pseudodifferential equations over non-Archimedean spaces}. Lectures Notes in
Mathematics 2174 (Springer, Cham, 2016).

\bibitem{Zuniga-EigenPardox} W.A. Z{\'{u}}{\~{n}}iga-Galindo,
\textquotedblleft Eigen paradox and the quasispecies model in a
non-Archimedean framework,\textquotedblright\ Phys. A \textbf{602}, Paper
No. 127648, 18 pp. (2022).

\bibitem{Zuniga-Nonlinear} W. A. Z{\'{u}}{\~{n}}iga-Galindo,
\textquotedblleft Non-Archimedean Reaction-Ultradiffusion Equations and
Complex Hierarchic Systems,\textquotedblright\ Nonlinearity \textbf{31 }(6),
2590-2616 (2018).

\bibitem{zuniga2020reaction} W.A. Z{\'{u}}{\~{n}}iga-Galindo,
\textquotedblleft Reaction-diffusion equations on complex networks and
Turing patterns, via $p$-adic analysis,\textquotedblright\ J. Math. Anal.
Appl. \textbf{491} (1), 124239, 39 pp. (2020).

\bibitem{W-morfo} W.A. Z{\'{u}}{\~{n}}iga-Galindo, \textquotedblleft
Non-Archimedean Models of Morphogenesis.\textquotedblright\ In:
Zuniga-Galindo, W.A., Toni, B. (eds) \textit{Advances in Non-Archimedean
Analysis and Applications}. STEAM-H: Science, Technology, Engineering,
Agriculture, Mathematics $\&$ Health (Springer, Cham2021).
\end{thebibliography}
\end{document}